\newtheorem{theorem}{Theorem}[section]
\newtheorem{lemma}[theorem]{Lemma}
\newtheorem{proposition}[theorem]{Proposition}
\newtheorem{cor}[theorem]{Corollary}
\theoremstyle{remark}
\newtheorem{remark}[theorem]{Remark}
\theoremstyle{definition}
\newtheorem{definition}[theorem]{Definition}
\theoremstyle{example}
\newtheorem{example}[theorem]{Example}
\theoremstyle{notation}
\newcommand{\bra}[1]{\langle#1|}
\newcommand{\ket}[1]{|#1\rangle}
\begin{document}

\title{Coherent spaces, Boolean rings and quantum gates}            
\author{A. Vourdas}
\affiliation{Department of Computer Science,\\
University of Bradford, \\
Bradford BD7 1DP, UK\\}

\begin{abstract}
Coherent spaces spanned by a finite number of coherent states, are introduced. Their coherence properties are studied, using the Dirac contour representation. It is shown that the corresponding projectors resolve the identity, and that they transform into projectors of the same type,
under displacement transformations, and also under time evolution. The set of these spaces, with the logical OR and AND operations is a distributive lattice,
and  with the logical XOR and AND operations is a Boolean ring (Stone\rq{}s formalism).
Applications of this Boolean ring into classical CNOT gates with $n$-ary  variables, and also quantum CNOT gates with coherent states, are discussed.
\end{abstract}
\maketitle

\section{Introduction}

Coherent states have been studied for a long time\cite{coh1,coh2,coh3}. They are non-orthogonal states with well known properties, and they have been used extensively in the general area of quantum optics and quantum information.

We consider $n$-dimensional spaces, spanned by $n$ coherent states (a finite number of coherent states are linearly independent).
We show that these spaces have coherence properties, analogous to those of coherent states:
Firstly, there is a resolution of the identity in terms of the projectors to all these spaces (also some smaller sets of coherent spaces, are total sets).
Secondly, they have a closure property, where under both displacement transformations and time evolution, these projectors are transformed into other projectors of the same type.
Thirdly, they obey some relations which can be viewed as extensions of the fact that coherent states are eigenstates of the annihilation operator.

We prove these properties using the Dirac contour representation\cite{DI1,DI2,DI3,DI4,DI5}. 
This represents a ket state with its Bargmann function, and the corresponding bra state
with another function that has poles in the complex plane. The scalar product is then given by a contour integral. 
In this language, a coherent space is described by a finite set of poles in the complex plane (the number of poles is equal to the dimension of the space).

The set of all finite sets of complex numbers (which in our context are poles describing coherent spaces) with union  (logical OR) and intersection (logical AND), is a distributive lattice. Following Stone \cite{S,S1,JS} we describe this lattice as Boolean ring, with the symmetric difference (logical XOR) as addition, and the  intersection (logical AND) as multiplication. The properties of this Boolean ring are discussed,
and provide the theoretical foundation for both classical gates and quantum gates. 
Especially, if we go from binary variables to n-ary variables, certain statements which are trivial in the case of binary variables, become complex in the case of n-ary variables, and the Boolean ring structure is used in their proof.

These results are transferred isomorphically to coherent spaces. The set of these spaces, with disjunction (logical OR) and conjunction (logical AND) form a distributive lattice, which is a sublattice of the Birkhoff-von Neumann lattice of subspaces (which is not distributive). 
We describe this lattice as a Boolean ring, and show that it is isomorphic to the Boolean ring of finite sets of complex numbers (poles).

We use the Boolean ring structure for the study of  CNOT gates (with $n$-ary variables), and also quantum CNOT gates.
Most of the work on quantum gates\cite{G1,G2,G3,G4,G5} uses orthogonal states.
Quantum gates with `almost orthogonal\rq{} coherent states (i.e., coherent states which are far from each other) have been studied in \cite{R,R1}.
Here we study quantum gates with coherent states, taking into account the non-orthogonal nature of coherent states.
Technically this is done with a tensor $g$ and its inverse $G$, which describe the overlap between coherent states, and which appear in the calculations.

In section II we discuss briefly coherent states and the Bargmann representation, in order to define the notation.
In section III we present the Dirac contour representation, and give some technical details which are needed later.
In section IV we introduce the coherent spaces and the corresponding projectors.
In section V we present the coherence properties of the coherent projectors (resolution of the identity; closure under displacement transformations and under time evolution; relations analogous to the `eigenstate property\rq{} of coherent states).
In section VI we study coherent projectors in the Dirac contour representation, and discuss technical details of calculations in the non-orthogonal basis of coherent states, using the tensor $g$ and its inverse $G$. 

In section VII consider the set of all finite sets of complex numbers, as a distributive lattice and a Boolean ring.
This uses the general formalism of Stone  \cite{S,S1,JS}, in our own context.
We also use this formalism to study reversible classical gates, and in particular CNOT gates.
In section VIII we transfer isomorphically this formalism to the coherent subspaces, and use it to study quantum CNOT gates with coherent states.
We conclude in section IX with a discussion of our results.

\section{Preliminaries}

\subsection{Coherent states}
Let $h$ be the harmonic oscillator Hilbert space. We denote with ${\cal O}$,
its zero-dimensional subspace that contains only the zero vector (which does not represent a quantum state).
We also denote with lower case $h_i$ general subspaces of $h$, and with upper case $H_i$ its coherent subspaces, which are introduced later.
$a,a^{\dagger}$ are the annihilation and creation operators, and 
\begin{eqnarray}
x=2^{-1/2}(a+a^{\dagger});\;\;\;\;\;p=2^{-1/2}i(a^{\dagger}-a)
\end{eqnarray}
are the position and momentum operators. 
$D(A)$ are the displacement operators
\begin{eqnarray}
D(A)=\exp(A a^{\dagger}-A^*a);\;\;\;\;\;A\in {\mathbb C}.
\end{eqnarray}
Coherent states\cite{coh1,coh2,coh3} are defined as
\begin{eqnarray}
\ket{A}=D(A)\ket{0}=\exp\left (-\frac{|A|^2}{2}\right )\sum _{N=0}^\infty \frac{A^N}{\sqrt{N!}}\ket {N};\;\;\;\;\;a\ket{A}=A\ket{A}
\end{eqnarray}
where $\ket{N}$ are number eigenstates.

Let $H(A)$ be the one-dimensional space that contains the coherent states $\ket{A}$, and $\Pi(A)$ the corresponding projector:
\begin{eqnarray}
\Pi (A)=\ket{A}\bra{A}.
\end{eqnarray}
We will use the notation
\begin{eqnarray}
\Pi^{\perp}(A)={\bf 1}-\Pi (A).
\end{eqnarray}
The $\Pi ^{\perp}(A)$ are not trace class operators.
It is easily seen that
\begin{eqnarray}\label{nnn}
D(z)\Pi (A)[D(z)]^{\dagger}=\Pi (A+z);\;\;\;\;\;D(z)\Pi ^{\perp}(A)[D(z)]^{\dagger}=\Pi ^{\perp}(A+z).
\end{eqnarray}

An important property of coherent states is the resolution of the identity
\begin{eqnarray}
\int _{\mathbb C}\frac{d^2A}{\pi}\Pi(A)={\bf 1}.
\end{eqnarray}
Another property (closure under time evolution) is that
under time evolution with the Hamiltonian $a^{\dagger}a$, a coherent state evolves into other coherent states:
\begin{eqnarray}\label{1111}
\exp (ita^{\dagger}a) \Pi (A) \exp (-ita^{\dagger}a)=\Pi [A\exp (it)].
\end{eqnarray}

\subsection{The Bargmann representation}
We consider the quantum state
\begin{eqnarray}\label{gen}
\ket{s}=\sum _{N=0}^{\infty}s_N\ket{N};\;\;\;\;\;\sum _{N=0}^{\infty}|s_N|^2=1
\end{eqnarray}
and use the notation
\begin{eqnarray}\label{gen1}
\bra{s}=\sum _{N=0}^{\infty}s_N^*\bra{N};\;\;\;\;\;\bra{s^*}=\sum _{N=0}^{\infty}s_N\bra{N};\;\;\;\;\;\ket{s^*}=\sum _{N=0}^{\infty}s_N^*\ket{N}.
\end{eqnarray}
In the Bargmann representation\cite{A1,A2,vour}, the state $\ket{s}$ is represented with the analytic function
\begin{eqnarray}\label{6mz}
\ket{s}\;\rightarrow\; s(z)=\exp \left (\frac{1}{2}|z|^2\right )\langle z^*\ket{s}=\sum _{N=0}^{\infty}\frac{s_Nz^N}{\sqrt{N!}}.
\end{eqnarray}
The scalar product is given by
\begin{eqnarray}\label{10}
\langle g\ket{s}=\int \frac{d^2z}{\pi}[g(z)]^*s(z)\exp(-|z|^2).
\end{eqnarray}

As an example, we consider the coherent state $\ket{A}$ for which the Bargmann function is
\begin{eqnarray}\label{9800}
\ket{A}\;\rightarrow \;f(z)=\exp\left (Az-\frac{1}{2}|A|^2\right ).
\end{eqnarray} 

 We next review briefly results about the growth\cite{C1,C2,C3} of Bargmann functions, and the density of their zeros.
They are needed in section \ref{totalset}.
\begin{definition}
The growth of an analytic function $s(z)$ is described with the order $\rho$ and type $\sigma$, given by
\begin{eqnarray}\label{dv1}
\rho=\limsup  _{R\rightarrow \infty}\frac{\ln \ln M(R)}{\ln R};\;\;\;\;\sigma= \limsup  _{R\rightarrow \infty}\frac{\ln M(R)}{R^\rho}.
\end{eqnarray}
Here $M(R)$ is the maximum value of $|s(z)|$ on the circle $|z|=R$.
A function $s(z)$ with growth $(\rho, \sigma)$ grows at large distances $R$ from the center, as $|s(z)|\approx \exp (\sigma R^\rho )$.
\end{definition}
A growth $(\rho_1, \sigma_1)$ is smaller than the growth $(\rho _2, \sigma _2)$, if $\rho _1<\rho _2$ (in which case we do not compare 
$\sigma_1,\sigma_2$) or if $\rho _1=\rho _2$ and $\sigma _1<\sigma _2$ (lexicographic order). We denote this as  $(\rho _1, \sigma _1)\prec (\rho _2, \sigma _2)$.

As an example, we consider the coherent state $\ket{A}$ for which the Bargmann function has growth $(1,|A|)$.

\begin{definition}
Consider a sequence of complex numbers $A _1,A_2,...$ such that
\begin{eqnarray}\label{vbn}
0<|A _1|\le |A _2|\le .... ;\;\;\;\;\;\lim _{N\rightarrow \infty}|A_N|=\infty
\end{eqnarray}
Let $n(R)$ be the number of terms of this sequence within the circle $|A|<R$.
The density of this sequence is described with the numbers
\begin{eqnarray}\label{102}
\eta=\limsup _{R\to \infty }\frac{\ln n(R)}{\ln R};\;\;\;\;\;\delta =\lim _{R\to \infty }\frac {n(R)}{R^\eta}
\end{eqnarray}
The number of terms of this sequence in a large circle with radius $R$ is $n(R)\approx \delta R^\eta$.
\end{definition}
We say that the density $(\eta, \delta)$ of a sequence is greater than $(\eta_1, \delta_1)$ if
$\eta > \eta_1$ and also if $\eta=\eta _1$ and $\delta>\delta_1$ (lexicographic order).

There are theorems that relate the growth of analytic functions with the density of their zeros.
Using these ideas in the context of Bargmann functions, we arrive at the following proposition, which we present without proof
(e.g., \cite{VO1,VO2} and references therein)
\begin{proposition}\label{580}
\mbox{}
\begin{itemize}
\item[(1)]
The growth of Bargmann functions is less than $(2,\frac{1}{2})$.
\item[(2)]
 The density of zeros of Bargmann functions is smaller than $(2,1)$.
 \end{itemize}
\end{proposition}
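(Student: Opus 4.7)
The plan is to deduce both claims from the Bargmann normalization $\sum_N |s_N|^2 = 1$ in (\ref{gen}), via Cauchy--Schwarz for part (1) and Jensen's formula for part (2).

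For part (1), I apply Cauchy--Schwarz directly to the series in (\ref{6mz}). Splitting $s_N z^N/\sqrt{N!}$ as $s_N$ times $z^N/\sqrt{N!}$ and using the normalization together with $\sum_N |z|^{2N}/N! = e^{|z|^2}$ gives the pointwise envelope
\begin{eqnarray}
|s(z)| \le \exp(|z|^2/2).
\end{eqnarray}
Substituting $M(R) \le e^{R^2/2}$ into Definition (\ref{dv1}) yields $\rho \le 2$, and in the borderline case $\rho = 2$ also $\sigma \le 1/2$. In the lexicographic order this is precisely the statement that the growth is at most $(2,1/2)$.

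For part (2), I use Jensen's formula as the bridge from growth to zero count. After factoring out any zero of order $m$ at the origin (which does not affect the asymptotic density), Jensen's identity applied to $s(z)$ reads
\begin{eqnarray}
N(R) := \int_0^R \frac{n(r)}{r}\,dr = \frac{1}{2\pi}\int_0^{2\pi}\ln|s(Re^{i\theta})|\,d\theta + O(1) \le \frac{R^2}{2} + O(1),
\end{eqnarray}
where the last inequality is the envelope from part (1). With the behaviour $n(R) \sim \delta R^\eta$ prescribed by (\ref{102}), the integrated counting function is $N(R) \sim \delta R^\eta/\eta$. Forcing this to be $O(R^2)$ immediately gives $\eta \le 2$; and in the borderline case $\eta = 2$, comparing the asymptotic $\delta R^2/2$ with the Jensen upper bound $R^2/2 + O(1)$ forces $\delta \le 1$. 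Hence the density is at most $(2,1)$ in the lexicographic order.

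There is no serious obstacle beyond invoking the two classical identities: Cauchy--Schwarz to convert $\ell^2$ normalization into a pointwise exponential envelope, and Jensen's formula to convert that envelope into a bound on the integrated counting function $N(R)$. The only minor bookkeeping points are the handling of a possible zero at the origin and the $o(R^\eta)$ remainder in the asymptotics of $n(R)$, neither of which obstructs the comparison of leading-order coefficients, so these are where I would focus the write-up but not the ideas.
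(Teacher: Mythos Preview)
Your argument is sound and follows the standard classical route: Cauchy--Schwarz converts the $\ell^2$ normalization into the pointwise envelope $|s(z)|\le e^{|z|^2/2}$, and Jensen's formula then bounds the integrated zero-counting function $N(R)$ by $R^2/2+O(1)$, from which the density bound follows. The bookkeeping points you flag (a possible zero at the origin, and the $\limsup$ in the definition of $\eta$ versus the asymptotic $n(R)\sim\delta R^\eta$) are genuine but routine, and your handling of them is adequate.

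Note, however, that the paper does \emph{not} supply its own proof of this proposition: it is stated explicitly as a result ``which we present without proof'' with a pointer to \cite{VO1,VO2} and the entire-function literature \cite{C1,C2,C3}. So there is nothing to compare against; you have in effect filled in what the paper outsources. One small point of alignment: the paper phrases both parts with a strict ``less than'' / ``smaller than'', whereas your Cauchy--Schwarz and Jensen bounds naturally yield ``at most'' $(2,\tfrac12)$ and $(2,1)$. Sharpening to strict inequality requires the additional observation that a normalizable state cannot saturate the envelope (e.g.\ $e^{z^2/2}$ is not in the Bargmann space), which you may want to mention if you intend to match the paper's wording exactly.
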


\section{Dirac's contour representation}

In the Dirac contour representation \cite{DI1,DI2,DI3,DI4,DI5} the ket states are represented by a different function than the bra states.
The ket state $\ket{s}$ in Eq.(\ref{gen}) and 
the corresponding bra state $\bra{s}$ in Eq.(\ref{gen1}), are represented by the functions:
\begin{eqnarray}\label{17}
\ket{s}\;\rightarrow\; s_k(z)=\sum _{N=0}^{\infty}\frac{s_Nz^N}{\sqrt{N!}}
;\;\;\;\;\bra{s}\;\rightarrow\;s_b(z)=\sum_{N=0}^\infty \frac{s_N^*\sqrt{N!}}{z^{N+1}}.
\end{eqnarray}
The indices $k,b$ in the notation indicate `ket functions' and `bra functions', correspondingly.
It is seen that ket states are represented in exactly the same way as in the Bargmann representation.
$s_k(z)$ is an analytic function, while $s_b(z)$ has singularities which play a crucial role in the formalism.

The scalar product is given by
\begin{eqnarray}
\langle f\ket{s}=\oint _C \frac{dz}{2\pi i}f_b(z) s_k(z)=\sum _{N=0}^{\infty}f_N^*s_N.
\end{eqnarray}
Here $C$ is a simple anticlockwise contour enclosing the singularities of $f_b(z)$.

The functions $s_k(z)$ and $s_b(z)$ are related through the transforms
\begin{eqnarray}
\oint _C \frac{dz}{2\pi i}s_b(z) \exp(\zeta ^*z)=[s_k(\zeta )]^*;\;\;\;\;\;
s_b(z)=\frac{1}{z}\int _0^{\infty} dt \exp (-t)\left[s_k\left (\frac{t}{z^*}\right )\right ]^*.
\end{eqnarray}

As an example, we consider the coherent state $\ket{A}$. In this case 
\begin{eqnarray}\label{98}
&&\ket{A}\;\rightarrow \;s_k(z)=\exp\left (Az-\frac{1}{2}|A|^2\right )\nonumber\\
&&\bra{A}\;\rightarrow \;s_b(z)=\frac{\exp(-\frac{1}{2}|A|^2)}{z-A^*};\;\;\;|z|>|A|
\end{eqnarray} 
Here the sum for $s_b(z)$ converges to the above result, only for $|z|>|A|$.
We define the annulus of a contour $C$, to be the region $r_{\rm min}\le |z|\le r_{\rm max}$ where $r_{\rm min}$ ($r_{\rm max}$) is the 
minimum (maximum) value of $|z|$ on the contour.
Then convergence requires that the annulus of $C$ should enclose the pole. 
More generally, the convergence requirements in the formalism are restrictions on the contour.

An operator 
\begin{eqnarray}
\Theta =\sum \Theta _{MN}\ket{M}\bra{N}
\end{eqnarray}
is represented by the function
\begin{eqnarray}\label{33}
\Theta (z_1,z_2)=\sum \Theta _{MN}\sqrt {\frac {N!}{M!}}\frac{z_1^M}{z_2^{N+1}},
\end{eqnarray}
and it acts on ket states as
\begin{eqnarray}\label{1959}
\Theta \ket{s}\;\rightarrow\; \oint _C \frac{d\zeta }{2\pi i}\Theta (z,\zeta)s_k(\zeta)=\sum  \Theta _{MN}s_N\ket{M},
\end{eqnarray}
and on bra states as
\begin{eqnarray}\label{194}
\bra {s}\Theta \;\rightarrow\; \oint _C \frac{d\zeta}{2\pi i}s_b(\zeta)\Theta (\zeta,z)=\sum  \Theta _{MN}s_M^*\ket{N}.
\end{eqnarray}
Its trace is given by
\begin{eqnarray}\label{tr}
{\rm Tr}\Theta =\oint _{C_1} \frac{dz_1}{2\pi i}\oint _{C_2} \frac{dz_2}{2\pi i}\frac{\Theta (z_1,z_2)}{z_2-z_1}=\sum  \Theta _{NN};\;\;\;\;|z_1|>|z_2|.
\end{eqnarray}
Here convergence requires that $|z_1|>|z_2|$, which means that the annulus of $C_2$ 
should be entirely inside the annulus of $C_1$. 
 
Products of two operators are given by
\begin{eqnarray}\label{product}
\Theta _1\Theta _2\;\rightarrow\; \oint _C \frac{d\zeta }{2\pi i}\Theta _1(z_1,\zeta)\Theta _2(\zeta , z_2).
\end{eqnarray}
If $\Theta$ is a projector then $\Theta ^2=\Theta$ and 
\begin{eqnarray}\label{prod1}
\oint _C \frac{d\zeta }{2\pi i}\Theta (z_1,\zeta)\Theta (\zeta , z_2)=\Theta (z_1,z_2).
\end{eqnarray}

As examples we consider the operators:
\begin{eqnarray}\label{26}
&&{\bf 1}\;\rightarrow\;\Theta (z_1,z_2)=\frac{1}{z_2-z_1};\;\;\;\;|z_2|>|z_1|\nonumber\\
&&\Pi(A)\;\rightarrow\;\Theta (z_1,z_2)=\frac{\exp\left (Az_1-|A|^2\right )}{z_2-A^*};\;\;\;\;|z_2|>|A|\nonumber\\
&&\ket{A_1}\bra{A_2}\;\rightarrow\;\Theta (z_1,z_2)=\frac{\exp\left (A_1z_1-\frac{1}{2}|A_1|^2-\frac{1}{2}|A_2|^2\right )}{z_2-A_2^*};\;\;\;\;|z_2|>|A_2|.
\end{eqnarray}
There are many technical details related to this formalism, some of which are discussed in \cite{DI4,DI5}. 
The region of convergence in the sum of Eq.(\ref{17}) for the bra functions,  needs to be studied on an individual basis.
In this paper we are interested in finite superpositions of coherent states, in which case the bra functions have a finite number of poles.

If $f_b(z)$ and $g_b(z)$ are two bra functions with finite sets of poles $S_1$ and $S_2$,
then the set of poles of the superposition $\lambda f_b(z)+\mu g_b(z)$ is $S_1\cup S_2$, or a subset of $S_1\cup S_2$ if there are cancellations
(e.g., if both $f_b(z)$ and $g_b(z)$ have the same term $\alpha /(z-A^*)$, then $f_b(z)-g_b(z)$ does not have a pole at $A^*$).
Analogous result holds for a superposition of a finite number of bra functions, but it does not hold for a superposition of an
infinite number of bra functions.
For example, the number state $\ket {N}$ is represented by the bra function $1/z^{N+1}$ and has a pole at zero.
Any finite superposition of number states has a pole at zero,
but the coherent state $\ket{A}$, which is an infinite superposition of number states, has a pole at $A^*$.

\section{Coherent subspaces}
\subsection{The Birkhoff-von Neumann lattice $\Lambda (h)$}

We consider the set $\Lambda (h)$ of closed subspaces of $h$.
For any two elements $h_1,h_2$ we define the 
disjunction (logical `OR\rq{}) operation $h_1\vee h_2$, which is the closed subspace that contains all superpositions of vectors in the subspaces $h_1,h_2$:
\begin{eqnarray}\label{1}
h_1 \vee h_2={\overline {\rm span}}[h_1 \cup h_2].
\end{eqnarray}
The overline indicates closure.
The quantum `OR' is different from the classical `OR' (which is the union in set theory).
The quantum `OR' is more than the union of two spaces, because it involves all superpositions.
The coherent spaces and coherent projectors below, are based on the quantum `OR'.

The conjuction (logical `AND') operation is
\begin{eqnarray}\label{2}
h_1 \wedge h_2=h_1 \cap h_2.
\end{eqnarray}
The relevant partial order is `subspace', and we use the notation $h_1 \prec h_2$ to indicate that $h_1$ is a subspace of $h_2$.
The negation (logical `NOT\rq{}) of $h_1$ is its orthocomplement $h_1^{\perp}$ (i.e., an orthogonal space to $h_1$ such that $h_1 \vee h_1^{\perp}=h$).
The set $\Lambda (h)$ with these operations is the Birkhoff-von Neumann lattice of closed subspaces of $h$,
that describes the logic of quantum mechanics\cite{LO1,LO2,LO3}.

\subsection{Coherent subspaces}

We use coherent states to introduce `coherent subspaces\rq{} of $h$, and a `coherent lattice' ${\cal L} _{\rm coh}$.
We consider the following two-dimensional space $H(A_1,A_2)=H(A_1) \vee H(A_2)$, that contains the superpositions of 
the coherent states $\alpha _1\ket{A_1}+\alpha _2 \ket{A_2}$.
In the Bargmann language it contains the functions 
\begin{eqnarray}
f(z)=\lambda _1 \exp (A_1z)+\lambda _2\exp(A_2z).
\end{eqnarray}
Inductively, we generalize this to a finite number of coherent states.
\begin{proposition}\label{2wd}
Any finite number of coherent states, are linearly independent.
\end{proposition}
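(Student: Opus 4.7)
The plan is to translate the claim into the Bargmann picture and reduce it to a classical linear-independence statement about exponentials. Given pairwise distinct labels $A_1,\ldots,A_n\in\mathbb{C}$, Eq.~(\ref{9800}) represents $\ket{A_i}$ by $f_i(z) = \exp(A_iz - \tfrac{1}{2}|A_i|^2)$. Because the Bargmann map is an injective linear map on the Hilbert space, linear independence of the $\ket{A_i}$ is equivalent to linear independence of the $f_i$, or (after factoring out the nonzero scalars $e^{-|A_i|^2/2}$) of the exponentials $e^{A_1 z},\ldots,e^{A_n z}$ viewed as entire functions of $z$.

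I would then show that any relation $\sum_{i=1}^n \lambda_i e^{A_i z}\equiv 0$ forces all $\lambda_i=0$ by a Vandermonde argument: differentiating $k$ times in $z$ and evaluating at $z=0$ yields $\sum_i \lambda_i A_i^k=0$ for every integer $k\ge 0$. Restricting to $k=0,1,\ldots,n-1$ gives a linear system whose coefficient matrix $V_{ki}=A_i^k$ is Vandermonde, with determinant $\prod_{1\le i<j\le n}(A_j-A_i)$. Since the labels are pairwise distinct this determinant is nonzero, hence $\lambda_i=0$ for all $i$. The induction alluded to in the surrounding text is essentially automatic here, since the Vandermonde step handles all $n$ at once; equivalently one can strip off one exponential at a time, dividing through by $e^{A_n z}$ and then differentiating to kill the last coefficient, thereby reducing $n$ exponentials to $n-1$.

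An alternative in the spirit of the paper would use the Dirac contour representation: the bra of $\ket{A_i}$ carries an isolated simple pole at $A_i^*$ by Eq.~(\ref{98}), and a nontrivial relation $\sum_i \lambda_i\bra{A_i}=0$ would, upon taking the residue of the combined bra function around each $A_i^*$, force $\lambda_i e^{-|A_i|^2/2}=0$ for every $i$. The one subtlety to flag is that the statement implicitly assumes the labels are pairwise distinct (otherwise two of the states coincide and the claim is false); granted that hypothesis, there is no real obstacle and the argument reduces to the routine Vandermonde calculation above.
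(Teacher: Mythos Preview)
Your argument is correct and follows essentially the same route as the paper: both expand a putative relation $\sum_j \lambda_j\ket{A_j}=0$ in the number basis (equivalently, Taylor-expand the Bargmann function at $z=0$) to obtain the moment equations $\sum_j c_j A_j^{N}=0$ with $c_j=\lambda_j e^{-|A_j|^2/2}$. The paper then stops at the informal remark that ``an infinite number of equations with a finite number of unknowns'' forces the trivial solution, which as stated is not a valid inference; your Vandermonde step for $N=0,\ldots,n-1$ is exactly the missing justification, and your observation that pairwise distinctness of the $A_j$ is an implicit hypothesis is well taken. The residue alternative you sketch via Eq.~(\ref{98}) is a genuinely different (and arguably cleaner) route not used in the paper.
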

\begin{proof}
We consider the coherent states $\ket {A_1},...,\ket{A_i}$ and show that
\begin{eqnarray}\label{98a}
\lambda _1\ket{A_1}+...+\lambda _i\ket{A_i} =0\;\;\rightarrow\;\;\lambda _1=...=\lambda _i=0.
\end{eqnarray} 
This can written as
\begin{eqnarray}\label{98b}
\sum _{N=0}^\infty \mu _N\ket{N}=0;\;\;\;\;\mu _N=\sum _{j=1}^i\lambda _j\exp\left (-\frac{1}{2}|A_j|^2\right )\frac{A_j^N}{\sqrt{N!}}.
\end{eqnarray}  
From this follows that $\mu _N=0$, for all $N$. 
We need to satisfy an infinite number of equations with a finite number of unknowns,
and the only solution is $\lambda _1=...=\lambda _i=0$.
\end{proof}
\begin{definition}\label{defi}
Let $S=\{A_1,...,A_i\}$ be a finite set of complex numbers, and $S^*=\{A_1^*,...,A_n^*\}$. 
\begin{itemize}
\item[(1)]
The coherent subspace of $h$ specified by $S$, is
\begin{eqnarray}
H(S)=H(A_1,...,A_i)=H(A_1)\vee...\vee H(A_i),
\end{eqnarray}
and contains all the superpositions $\alpha _1\ket{A_1}+...+\alpha _i\ket{A_i}$.
The space $H(S)$ is $i$-dimensional, and the coherent states $\ket {A_1},...,\ket{A_i}$ form a non-orthogonal basis in it.
For $S=\emptyset$, the $H(\emptyset)={\cal O}$ (this is different from the
one-dimensional space $H(0)$ that contains the vacuum).
We denote as $\Pi(S)=\Pi(A_1,...,A_i)$ the projector to the space $H(S)=H(A_1,...,A_i)$, and as
\begin{eqnarray}
\Pi^{\perp}(A_1,...,A_i)={\bf 1}-\Pi(A_1,...,A_i).
\end{eqnarray}
\item[(2)]
In the Bargmann language, the coherent subspace specified by $S$, contains the 
`coherent Bargmann functions\rq{} 
\begin{eqnarray}\label{4vv}
f(z)=\lambda _1 \exp (A_1z)+...+\lambda _i\exp(A_iz).
\end{eqnarray}
The growth of these functions is less or equal to $\rho=1$ and $\sigma =\max(|A_1|,...|A_i|)$.
\item[(3)]
In Dirac's contour representation language, 
the coherent subspace specified by $S$, contains the `coherent ket\rq{} functions of Eq.(\ref{4vv}), and the `coherent bra\rq{} functions 
\begin{eqnarray}\label{4v}
f_b(z)=\frac{\lambda _1}{z-A_1^*}+...+\frac{\lambda _i}{z-A_i^*}
\end{eqnarray}
The set of poles of the coherent bra functions is $S^*$ or a subset of $S^*$ (if some of the $\lambda _k$ are zero).
\end{itemize}
\end{definition}
We consider two states in $H(S)$:
\begin{eqnarray}\label{44}
\ket{u}=\sum _{j=1}^i u _j\ket{A_j};\;\;\;\;\ket{v}=\sum _{j=1}^i v _j\ket{A_j}.
\end{eqnarray}
Their overlap is given by
\begin{eqnarray}\label{prod}
\langle {u}\ket{v}=\sum _{j,k} u _j^*v _k g_{jk}(S)
\end{eqnarray}
where $g(S)$ is the $i\times i$ Hermitian matrix of rank $i$, with elements
\begin{eqnarray}\label{matr}
g_{jk}(S)=\langle A_j\ket{A_k}=\exp \left (A_j^*A_k-\frac{1}{2}|A_j|^2-\frac{1}{2}|A_k|^2\right ).
\end{eqnarray}
Its diagonal elements are $g_{jj}(S)=1$.
The normalization condition for the vector $\ket {u}$ is
\begin{eqnarray}\label{norma}
\sum _{j,k} u _j^*u _k g_{jk}(S)=1.
\end{eqnarray}
It is seen that $g(S)$ is a positive-definite matrix.
Let $\gamma _j(S)$, $e_j(S)$, $E_j(S)$ be its eigenvalues, eigenvectors and eigenprojectors:
\begin{eqnarray}\label{eigv}
g(S)e_j(S)=\gamma _j(S)e_j(S);\;\;\;E_j(S)=e_j(S)[e_j(S)]^{\dagger};\;\;\;\gamma_j(S)>0.
\end{eqnarray}
Then
\begin{eqnarray}
g(S)=\sum _j\gamma _j(S)E_j(S);\;\;\;E_j(S)E_k(S)=E_j(S)\delta_{jk};\;\;\;\sum _jE_j(S)=\Pi(S).
\end{eqnarray}
For later use we also define its inverse matrix $G(S)$:
\begin{eqnarray}\label{pd}
G(S)=[g(S)]^{-1}=\sum _j\frac{1}{\gamma _j(S)}E_j(S).
\end{eqnarray}
The inverse of the matrix $g$ exists because a finite number of coherent states are linearly independent (both matrices $g$ and $G$ are of rank $i$).

We note that in the limit $\min _{k,j}(|A_j-A_k|)\rightarrow \infty$ the off-diagonal elements of the matrix $g$ become zero and 
$g\rightarrow {\bf 1}$.

The set of poles of the bra functions in Dirac's contour representation, links the algebraic structure
of finite sets of complex numbers (studied in section \ref{X2}),
to the coherent subspaces.
The term coherent subspaces refers to the properties of the corresponding projectors $\Pi(S)=\Pi(A_1,...,A_i)$, which are discussed in section \ref{coh}. 

\begin{remark}
An arbitrary finite-dimensional subspace of $h$, is not in general a coherent subspace. 
For example, the one dimensional subspace that contains the vector ${\cal K}[\ket{A_1}+\ket{A_2}]$
(where ${\cal K}$ is a normalization constant), is not a  coherent subspace.
There is no coherent state $\ket{A_3}$ equal to ${\cal K}[\ket{A_1}+\ket{A_2}]$ (because the $\ket{A_1}$, $\ket{A_2}$, $\ket{A_3}$ are linearly independent). Only some finite-dimensional subspaces of $h$ are coherent subspaces, and they are labeled with finite sets of complex numbers.
\end{remark}
For later use we give the following example.
\begin{example}\label{ex1}
In the case $S=\{A_1,A_2\}$, the corresponding $g$-matrix is
\begin{eqnarray}\label{bbnn}
&&g(A_1,A_2)=
\begin{pmatrix}
1&\mu e^{i\theta}\\
\mu e^{-i\theta}&1
\end{pmatrix};\;\;\;\;
\mu =\exp \left (-\frac{1}{2}|A_1-A_2|^2\right )\nonumber\\
&& \theta =\frac{i}{2} (A_2^*A_1-A_1^*A_2)
\end{eqnarray}
The inverse of $g(A_1,A_2)$ is:
\begin{eqnarray}\label{inverse}
&&G(A_1,A_2)=[g(A_1,A_2)]^{-1}=\frac{1}{1-\mu ^2}
\begin{pmatrix}
1&-\mu e^{i\theta}\\
-\mu e^{-i\theta}&1
\end{pmatrix}.
\end{eqnarray}
The eigenvalues of the $g$ matrix are $\gamma_1=1+\mu$ and $\gamma _2=1-\mu$ and the corresponding eigenvectors are
\begin{eqnarray}\label{pd15}
&&e_1(A_1,A_2)=\frac{1}{\sqrt 2}
\begin{pmatrix}
e^{i\theta}\\
1
\end{pmatrix};\;\;\;\;
e_2(A_1,A_2)=\frac{1}{\sqrt 2}
\begin{pmatrix}
-e^{i\theta}\\
1
\end{pmatrix}
\end{eqnarray}
We also introduce the corresponding projectors
\begin{eqnarray}
E_1(A_1,A_2)=\frac{1}{2}
\begin{pmatrix}
1 &e^{i\theta}\\
e^{-i\theta} &1
\end{pmatrix};\;\;\;\;
E_2(A_1,A_2)=\frac{1}{2}
\begin{pmatrix}
1 &-e^{i\theta}\\
-e^{-i\theta} &1
\end{pmatrix}
\end{eqnarray}

\end{example}

\subsection{Coherent projectors }

We consider the projector $\Pi(A_1,A_2)$ to the space $H(A_1,A_2)$ and we prove that for $A_1\ne A_2$
\begin{eqnarray}\label{35}
\Pi(A_1,A_2)&=&{\tau}_2\left[\Pi(A_1)+\Pi_1(A_2)-\Pi(A_1)\Pi(A_2)-\Pi(A_2)\Pi(A_1)\right ]
\nonumber\\&=&{\tau}_2\left[\Pi ^{\perp}(A_2)\Pi (A_1)+\Pi ^{\perp}(A_1)\Pi(A_2)\right ],
\end{eqnarray}
where
\begin{eqnarray}
&&\tau_2=\{{\rm Tr}[\Pi^{\perp}(A_1)\Pi(A_2)]\}^{-1}=\frac{1}{1-\mu ^2}.
\end{eqnarray}
$\mu$ has been given in Eq.(\ref{bbnn}).
We prove this using the Gram-Schmidt orthogonalization algorithm.
We take the component of $\ket{A_2}$ which is perpendicular to $\ket {A_1}$, normalize it into the vector $\ket{u_2}$ with length equal to one,
and then add $\Pi(A_1)$ and $\ket{u _2}\;\bra{u _2}$:
\begin{eqnarray}\label{alg}
&&\Pi (A_1,A_2)=\Pi (A_1)+\ket{u_2}\;\bra{u_2};\;\;\;\;\ket{u_2}=\sqrt {{\tau}_2}\Pi ^{\perp}(A_1)\ket{A_2}.
\end{eqnarray}
This can be written in a compact way, as
\begin{eqnarray}\label{alg}
&&\Pi(A_1,A_2)=\varpi(A_1)+\varpi (A_2|A_1)\nonumber\\
&&\varpi (A_1)=\Pi (A_1);\;\;\;\;\varpi (A_2|A_1)=\ket{u_2}\;\bra{u_2}=\frac{\Pi ^{\perp}(A_1)\Pi (A_2)\Pi ^{\perp}(A_1)}{{\rm Tr}[\Pi ^{\perp}(A_1)\Pi (A_2)]}
\end{eqnarray}
The $\varpi (A_2|A_1)$ is a projector (to a one-dimensional space) orthogonal to $\Pi (A_1)$, and it is equal to $\Pi (A_1,A_2)-\Pi (A_1)$. 
We insert $\Pi^{\perp}(A)={\bf 1}-\Pi(A)$ into Eq.(\ref{alg}), and we get Eq.(\ref{35}).

The $\Pi(A_1,A_2)$ can also be written in terms of the matrix $G(A_1,A_2)$ in Eq.(\ref{inverse}), as
\begin{eqnarray}
\Pi(A_1,A_2)=\sum _{j,k}G_{jk}(A_1,A_2)\ket{A_j}\bra{A_k};\;\;\;\;j,k=1,2.
\end{eqnarray}

If $\rho$ is a density matrix, the $Q(A)={\rm Tr}[\rho\Pi(A)]$ is the probability that a measurement with the projector $\Pi(A)$
will give `yes'. But since the $\Pi(A)$ are not orthogonal projectors to each other, the $Q(A)$ as a distribution (known as the 
Husimi $Q$-function), is not a real probability distribution. It is a quasi-probability distribution 
of a quantum particle being at the point $A$ in phase space, and it has the property
\begin{eqnarray}\label{3sd}
\int \frac {d^2A}{\pi}Q(A)=1.
\end{eqnarray}
In a similar way the $Q(A_1,A_2)={\rm Tr}[\rho\Pi(A_1,A_2)]$ is the probability that a measurement with the projector $\Pi(A_1,A_2)$
will give `yes'. Here also the $\Pi(A_1,A_2)$ are not orthogonal projectors to each other, and the $Q(A_1,A_2)$ (which is a generalization of the 
Husimi $Q$-function), is not a real probability distribution. It is a quasi-probability distribution 
of a quantum particle being at the point $A_1$ `OR' $A_2$ in phase space.
This is the quantum `OR' in the Birkhoff-von Neumann lattice that involves superpositions, and it is different
from the classical `OR' in Boolean algebras.
The space $H(A_1,A_2)$ contains superpositions of coherent states (`Schr\"odinger cats')
and the discussion on Schr\"odinger cats elucidates the difference between the quantum and classical `OR'.
In Eq.(\ref{9850}) below it is shown that the $Q(A_1,A_2)$ obeys a relation analogous to Eq.(\ref{3sd}).

Inductively, we generalize the above formalism to a finite number of coherent states.
Using the Gram-Schmidt orthogonalization algorithm, we express the projector $\Pi (A_1,...,A_i)$ to the space $H(A_1,...,A_i)$, as
\begin{eqnarray}\label{alg10}
&&\Pi(A_1,...,A_i)=\Pi(A_1,...,A_{i-1})+\ket{u_i}\;\bra{u_i};\;\;\;A_j\ne A_k\nonumber\\
&&\ket{u_i}=\sqrt {\tau _i}\Pi ^{\perp}(A_1,...,A_{i-1})\ket{A_i}\nonumber\\
&&\tau_i=\{{\rm Tr}[\Pi^{\perp}(A_1,...,A_{i-1})\Pi(A_i)]\}^{-1}.
\end{eqnarray}
The linear independence of the coherent states that we proved above, ensures that the ${\rm Tr}[\Pi^{\perp}(A_1,...,A_{i-1})\Pi(A_i)]$ is different from zero. 
We rewrite this as
\begin{eqnarray}\label{alg1}
&&\Pi(A_1,...,A_i)=\Pi(A_1,...,A_{i-1})+\varpi (A_i |A_1,...,A_{i-1})\nonumber\\
&&\varpi (A_i |A_1,...,A_{i-1})=\ket{u_{i}}\bra{u_{i}}=\frac{\Pi  ^{\perp} (A_1,...,A_{i-1}) \Pi (A_i)\Pi ^{\perp}(A_1,...,A_{i-1})}
{{\rm Tr}[\Pi ^{\perp}(A_1,...,A_{i-1})\Pi (A_i)]};\;\;\;\;A_j\ne A_k\nonumber\\
&&{\rm Tr}[\Pi (A_1,...,A_i)]=i;\;\;\;\;\;{\rm Tr}[\varpi (A_i |A_1,...,A_{i-1})]=1.
\end{eqnarray}
We call the $\Pi (A_1,...,A_i)$, $\varpi (A_i |A_1,...,A_{i-1})$ coherent projectors, because 
they have properties analogous to coherent states, as discussed in section \ref{coh}.

The $\varpi (A_1), \varpi (A_2|A_1),...,\varpi (A_n|A_1,...,A_{n-1})$ are projectors orthogonal to each other, and
$\varpi (A_n |A_1,...,A_{n-1})\ne \Pi(A_n)$. From Eq.(\ref{alg1}), it follows that
\begin{eqnarray}\label{59}
&&\Pi (A_1,...,A_n)=\varpi (A_1)+\varpi (A_2|A_1)+...+\varpi (A_n|A_1,...,A_{n-1}).
\end{eqnarray}
Let
\begin{eqnarray}\label{sets}
S_1=\{A_1,...,A_j\};\;\;\;\;S_2=\{B_1,...,B_i\};\;\;\;S_1\cap S_2=\emptyset.
\end{eqnarray}
From Eq.(\ref{59}), it follows that
\begin{eqnarray}
&&\Pi(S_1\cup S_2)=\Pi(S_1)+\varpi(S_2|S_1);\;\;\;\Pi(S_1)\varpi(S_2|S_1)=0\nonumber\\
&&\varpi(S_2|S_1)=\varpi (B_1|S_1)+\varpi (B_2|S_1\cup\{B_1\})+...+\varpi (B_i|S_1\cup \{B_1,...,B_{i-1}\}).
\end{eqnarray}

The $Q(A_1,...,A_i)={\rm Tr}[\rho\Pi(A_1,...,A_i)]$ is the probability that a measurement with the projector $\Pi(A_1,...,A_i)$
will give `yes'. The generalized
Husimi $Q$-function $Q(A_1,...,A_i)$ is a quasi-probability distribution 
of a quantum particle being at the point $A_1$ `OR' $A_2$ `OR' $A_3$, etc (the quantum `OR' that involves superpositions).
An integral analogous to Eq.(\ref{3sd}) is given in Eq.(\ref{9850}) below.

\subsection{Total sets of coherent subspaces}\label{totalset}

\begin{definition}
A set of subspaces $\{h_i\}$ is called total, if there is no state in $h$, which is orthogonal to all $h_i$. 
\end{definition}
\begin{lemma}\label{LL1}
A state $\ket{s}$ is orthogonal to the coherent subspace $H(A_1,...,A_i)$ (i.e., $\Pi(A_1,...,A_i)\ket{s}=0$), if and only if the $A_1^*,...,A_i^*$ are zeros of its Bargmann function $s(z)$ (i.e., $s(A_j^*)=0$ for $j=1,...,i$).
\end{lemma}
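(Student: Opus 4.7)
The plan is to reduce the condition $\Pi(A_1,\dots,A_i)\ket{s}=0$ to a finite set of scalar equations $\langle A_j\ket{s}=0$, and then to translate each such equation into a zero of the Bargmann function $s(z)$ via the defining relation (\ref{6mz}).

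First I would observe that, since by Proposition \ref{2wd} the coherent states $\ket{A_1},\dots,\ket{A_i}$ form a basis of $H(A_1,\dots,A_i)$, orthogonality of $\ket{s}$ to the whole subspace is equivalent to orthogonality to each basis vector. Concretely, $\Pi(A_1,\dots,A_i)\ket{s}=0$ if and only if $\bra{w}\ket{s}=0$ for every $\ket{w}\in H(A_1,\dots,A_i)$, which by linearity reduces to $\langle A_j\ket{s}=0$ for $j=1,\dots,i$. (One could equivalently note that $\ket{s}\in \ker \Pi(A_1,\dots,A_i)$ means $\ket{s}$ lies in $H(A_1,\dots,A_i)^{\perp}$, i.e.\ is annihilated by every bra in the span of $\bra{A_1},\dots,\bra{A_i}$.)

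Next I would invert Eq.~(\ref{6mz}). Since $s(z)=\exp(\tfrac{1}{2}|z|^2)\langle z^*\ket{s}$, substituting $z=A_j^*$ gives
\begin{eqnarray}
s(A_j^*)=\exp\!\left(\tfrac{1}{2}|A_j|^2\right)\langle A_j\ket{s}.
\end{eqnarray}
Because the exponential prefactor is never zero, $\langle A_j\ket{s}=0$ if and only if $s(A_j^*)=0$. Combining this with the reduction in the first step yields the claimed equivalence: $\Pi(A_1,\dots,A_i)\ket{s}=0$ iff $s(A_j^*)=0$ for all $j=1,\dots,i$.

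There is no serious obstacle here; the statement is essentially a translation of the orthogonality condition through the Bargmann dictionary. The only point worth stating carefully is the reduction from subspace orthogonality to orthogonality against the (non-orthogonal) basis of coherent states, which is justified by the linear independence proved in Proposition \ref{2wd}.
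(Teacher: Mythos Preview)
Your proof is correct and follows the same approach as the paper: both reduce orthogonality to the subspace to orthogonality against each coherent state $\ket{A_j}$, and then invoke Eq.~(\ref{6mz}) to identify $\langle A_j\ket{s}=0$ with $s(A_j^*)=0$. Your version is somewhat more explicit about the ``only if'' direction and about the role of the nonvanishing exponential factor, but there is no substantive difference in method.
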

\begin{proof}
From Eq.(\ref{6mz}), it follows that 
a state $\ket{s}$ is orthogonal to the coherent state $\ket{\zeta }$, if and only if $\zeta ^*$ is a zero of its Bargmann function $s(z)$. 
If  $\ket{s}$ is orthogonal to the coherent subspace $H(A_1,...,A_i)$, then it is orthogonal to the coherent states $\ket{A_j}$ and therefore it has the 
$A_j^*$ as zeros.
\end{proof}

\begin{proposition}\label{total}
\mbox{}
\begin{enumerate}
\item[(1)]
A set of coherent subspaces which is uncountably infinite, is a total set.
\item[(2)]
Let $\{H(S_i)\}$ be a countably infinite set of coherent subspaces with $S_i=\{A_{i1},...,A_{i k_i}\}$.
We use a lexicographic order and relabel the $A_{ij}$ as $A_n$.
\begin{itemize}
\item
If the sequence $A_n$ converges to some point $A$, the $\{H(S_i)\}$ is a total set of coherent subspaces.
\item
If the sequence $|A_n|$ diverges,  and it has density greater than $(2,1)$ then the $\{H(S_i)\}$ is a total set of coherent subspaces.
\end{itemize}
\end{enumerate}
\end{proposition}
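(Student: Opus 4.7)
The plan is to reduce everything to Lemma \ref{LL1}: a state $\ket{s}$ is orthogonal to every $H(S_i)$ exactly when its Bargmann function $s(z)$ vanishes on $T := \bigcup_i S_i^*$. Hence $\{H(S_i)\}$ is a total set if and only if no nonzero Bargmann function vanishes on $T$. The three cases then reduce to standard facts about zeros of the entire functions appearing as Bargmann functions, together with Proposition \ref{580}.

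For part (1), I would first argue that uncountability of $\{H(S_i)\}$ forces $T$ to be uncountable. Since distinct coherent subspaces correspond to distinct finite sets $S_i$ of complex numbers, and the map from $\{S_i\}$ into the finite subsets of $\bigcup_i S_i$ is injective, $\bigcup_i S_i$ (and hence $T$) cannot be countable — otherwise the set of its finite subsets would be countable too. A nonzero entire function, in particular any nonzero Bargmann function, has a zero set that is discrete and therefore at most countable. So no nonzero $s(z)$ can vanish on $T$, proving totality.

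For the first bullet of part (2), convergence $A_n \to A$ gives $A_n^* \to A^*$, so the zero set contains an accumulation point. By the identity theorem, any entire $s(z)$ vanishing at all $A_n^*$ is identically zero; hence $\{H(S_i)\}$ is total. For the second bullet, the density of $\{A_n^*\}$ equals that of $\{A_n\}$ (since $|A_n^*|=|A_n|$ and $n(R)$ is the same for both sequences), so by assumption it exceeds $(2,1)$. If some nonzero $s(z)$ vanished on $\{A_n^*\}$, its zeros would have density at least $(2,1)$, contradicting Proposition \ref{580}(2). Therefore the set is total.

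I do not anticipate a real obstacle here; the arguments are direct corollaries of the identity theorem and of the growth/density bounds already recorded in Proposition \ref{580}. The only subtlety worth stating carefully is the cardinality step in (1), since one might worry that many $S_i$ could share most of their points, but the injection into finite subsets of $\bigcup_i S_i$ handles this cleanly. A brief remark at the end can point out that the density threshold $(2,1)$ in (2b) is sharp in view of Proposition \ref{580}, which is why convergence (infinite local density) suffices in the bounded case.
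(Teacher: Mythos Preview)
Your approach is correct and essentially identical to the paper's: both reduce via Lemma \ref{LL1} to the statement that no nonzero Bargmann function can vanish on the relevant set of points, then invoke the isolated-zeros/identity theorem for (1) and the first bullet of (2), and Proposition \ref{580} for the second bullet. Your cardinality argument in (1) --- that uncountably many distinct finite sets $S_i$ force $\bigcup_i S_i$ to be uncountable --- is a detail the paper leaves implicit, so your version is in fact slightly more careful there.
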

\begin{proof}
\mbox{}
\begin{itemize}
\item[(1)]
The zeros of analytic functions are isolated. 
Therefore the total number of zeros of an analytic function is at most countably infinite. According to lemma \ref{LL1}, 
a state which is orthogonal to all subspaces in an uncountably infinite set of coherent subspaces,
would have an uncountably infinite set of zeros.
Consequently, a set of coherent subspaces which is uncountably infinite, is a total set.
\item[(2)]
According to lemma \ref{LL1}, a state which is orthogonal to all subspaces $\{H(S_i)\}$, will have the $A_N^*$ as zeros.
The zeros of analytic functions are isolated, and they can not converge to some point $A$.
Also in the case that the $|A_n|$ diverges, proposition \ref{580} shows that its density should be smaller than $(2,1)$.
Therefore in both cases, the $\{H(S_i)\}$ is a total set of coherent subspaces.
\end{itemize}
\end{proof}
For some total sets of coherent subspaces, we might be able to find a resolution of the identity in terms of the corresponding projectors. An example of this is given below.

\section{Properties of coherent projectors}\label{coh}

In this section we show that the coherent projectors $\Pi(A_1,...,A_n)$ have properties similar to those of $\Pi (A)$, i.e., to coherent states.

The following proposition is a resolution of the identity.
\begin{proposition}\label{propo}
\mbox{}
\begin{itemize}
\item[(1)]
The following resolution of the identity holds in terms of the  projectors ${\Pi}(A,A+d_2,...,A+d_n)$
(of rank $n$), with fixed $d_2,...,d_n$:
\begin{eqnarray}\label{985A}
\int _{\mathbb C}\frac{d^2A}{n\pi}{\Pi}(A,A+d_2,...,A+d_n)={\bf 1}
\end{eqnarray}
\item[(2)]
The following resolution of the identity holds in terms of the  projectors 
${\varpi}(A|A+d_1,...,A+d_{n-1})$ (of rank $1$), with fixed $d_1,...,d_{n-1}$:
\begin{eqnarray}\label{99}
\int _{\mathbb C}\frac{d^2A}{\pi}{\varpi} (A|A+d_1,...,A+d_{n-1})={\bf 1}
\end{eqnarray}

\end{itemize}
\end{proposition}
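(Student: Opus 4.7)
The plan is to reduce both resolutions of the identity to the well-known ``Schur-type'' displacement identity
\[
\int _{\mathbb C}\frac{d^2A}{\pi}D(A)\,\Theta\, [D(A)]^{\dagger}=({\rm Tr}\,\Theta)\,{\bf 1},
\]
valid for any trace class operator $\Theta$. This identity follows from the irreducibility of the displacement operators: after the change of variables $A\mapsto A-z$ and the Weyl composition law $D(z)D(A)=\exp[(zA^{*}-z^{*}A)/2]\,D(z+A)$, the left-hand side is seen to be invariant under conjugation by every $D(z)$, hence is a scalar multiple of the identity, and the scalar is fixed by taking the trace. The two parts of the proposition will each correspond to a specific choice of $\Theta$.

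First I would establish the displacement covariance of the coherent projectors, namely
\[
D(z)\,\Pi (A_1,\ldots ,A_n)\,[D(z)]^{\dagger}=\Pi (A_1+z,\ldots ,A_n+z).
\]
This is immediate from $D(z)\ket{A_j}=\exp[i\,{\rm Im}(zA_j^{*})]\ket{A_j+z}$, a direct consequence of the Weyl composition law, which shows that the unitary $D(z)$ maps $H(A_1,\ldots ,A_n)={\rm span}\{\ket{A_j}\}$ bijectively onto $H(A_1+z,\ldots ,A_n+z)$. The same covariance then transfers automatically to the rank-one projector $\varpi$ by its definition in Eq.(\ref{alg1}), because $\Pi$ and $\Pi^{\perp}$ transform in this way and the scalar normalization ${\rm Tr}[\Pi ^{\perp}(\cdots )\Pi (\cdot )]$ in the denominator is invariant under unitary conjugation.

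With these two ingredients I would then write each integrand as a displacement-conjugate of a fixed operator. For part (1), take $\Theta _1=\Pi (0,d_2,\ldots ,d_n)$; the covariance yields $\Pi (A,A+d_2,\ldots ,A+d_n)=D(A)\,\Theta _1\,[D(A)]^{\dagger}$, and since ${\rm Tr}\,\Theta _1=n$ (noted after Eq.(\ref{alg1})), the Schur-type identity produces $n\,{\bf 1}$; dividing by $n$ gives Eq.(\ref{985A}). For part (2), take $\Theta _2=\varpi (0\,|\,d_1,\ldots ,d_{n-1})$; the covariance yields $\varpi (A\,|\,A+d_1,\ldots ,A+d_{n-1})=D(A)\,\Theta _2\,[D(A)]^{\dagger}$, and since ${\rm Tr}\,\Theta _2=1$, the Schur-type identity yields Eq.(\ref{99}) directly.

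I expect the only delicate point to be the justification of the Schur-type integral in a weak/operator sense: convergence when matrix elements are taken between vectors of a dense domain such as finite superpositions of number states. This is standard in the coherent-state literature and is already implicit in the textbook resolution $\int \frac{d^2A}{\pi}\Pi(A)={\bf 1}$ recalled by the author (which is the special case $\Theta=\ket{0}\bra{0}$). All remaining steps are elementary manipulations of displacement operators together with the definitions in Eqs.(\ref{alg1}) and (\ref{59}).
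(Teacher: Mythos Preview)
Your proposal is correct and follows essentially the same route as the paper: both apply the displacement identity $\int \frac{d^2A}{\pi}D(A)\Theta D(A)^\dagger=({\rm Tr}\,\Theta)\,{\bf 1}$ to the choices $\Theta=\Pi(0,d_2,\ldots,d_n)$ and $\Theta=\varpi(0\,|\,d_1,\ldots,d_{n-1})$, using displacement covariance of the coherent projectors. The only organisational difference is that the paper cites the Schur-type identity from a reference rather than sketching its proof, and defers the covariance statement to the subsequent proposition; your version makes both ingredients explicit within the argument, which if anything is cleaner.
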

\begin{proof}
\mbox{}
\begin{itemize}
\item[(1)]
According to Eq.(38) in ref.\cite{vour},
for any trace class operator $\Theta$:
\begin{eqnarray}\label{bb}
\int _{\mathbb C}\frac{d^2z}{\pi}D(z)\Theta [D(z)]^{\dagger}={\bf 1}{\rm Tr}\Theta.
\end{eqnarray}
We use this relation with $\Theta={ \Pi}(0,d_2,...,d_n)$ (in which case ${\rm Tr}\Theta=n$),
and we get Eq.(\ref{985A}).
\item[(2)]
We use Eq.(\ref{bb}) with ${\varpi}(A|A+d_1,...,A+d_{n-1})$ (in which case ${\rm Tr}\Theta=1$), and we get Eq.(\ref{99}).
\end{itemize}
\end{proof}
The following proposition proves a closure property, under displacement transformations and time evolution.
\begin{proposition}\label{propo1}
\mbox{}
\begin{itemize}

\item[(1)]
Under displacement transformations the $\Pi (A_1,...,A_n)$, $\varpi (A_n |A_1,...,A_{n-1})$, are transformed into projectors of the same type:
\begin{eqnarray}\label{1121}
&&D(z) \Pi (A_1,...,A_n) [D(z)]^{\dagger}=\Pi (A_1+z,...,A_n+z)\nonumber\\
&&D(z) \varpi  (A_n |A_1,...,A_{n-1}) [D(z)]^{\dagger}=\varpi (A_n+z |A+z,...,A_{n-1}+z).
\end{eqnarray}
This is analogous to Eq.(\ref{nnn}) for coherent states.
\item[(2)]
Under time evolution with the Hamiltonian $a ^{\dagger}a$, the $\Pi (A_1,...,A_n)$, $\varpi (A_n |A_1,...,A_{n-1})$, are transformed into projectors of the same type:
\begin{eqnarray}\label{1121A}
&&\exp (ita ^{\dagger}a) \Pi (A_1,...,A_n) \exp (-ita ^{\dagger}a )=\Pi [A_1\exp(it),...,A_n\exp(it)]\nonumber\\
&&\exp (ita^{\dagger}a) \varpi (A_n |A_1,...,A_{n-1}) \exp (-ita ^{\dagger}a )=\varpi [A_n \exp(it)|A_1\exp(it),...,A_{n-1}\exp(it)].
\end{eqnarray}
This is analogous to Eq.(\ref{1111}) for coherent states.
\end{itemize}
\end{proposition}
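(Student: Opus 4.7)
The plan is to exploit the elementary fact that for any unitary $U$, conjugation $UPU^{\dagger}$ sends the orthogonal projector onto a closed subspace $V$ to the projector onto the image $UV$. Since a finite family of coherent states is linearly independent (Proposition~\ref{2wd}) and any unitary preserves linear independence, we have $U H(A_1,\ldots,A_n)=\mathrm{span}\{U\ket{A_1},\ldots,U\ket{A_n}\}$. The task therefore reduces to identifying $U\ket{A_j}$ up to an irrelevant scalar, for $U=D(z)$ in part~(1) and $U=\exp(ita^{\dagger}a)$ in part~(2). Then, to move from $\Pi$ to $\varpi$, I would use the telescoping identity
\begin{equation*}
\varpi(A_n|A_1,\ldots,A_{n-1})=\Pi(A_1,\ldots,A_n)-\Pi(A_1,\ldots,A_{n-1})
\end{equation*}
which is the rearrangement of Eq.(\ref{alg1}), and subtract.

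For part~(1), I would invoke the Weyl composition law $D(z)D(A)=\exp[\tfrac{1}{2}(zA^{*}-z^{*}A)]\,D(z+A)$ applied to $\ket{0}$. This yields $D(z)\ket{A_j}=e^{i\varphi_j}\ket{A_j+z}$ for some real phase $\varphi_j$, and since scalar prefactors do not affect the span, $D(z)H(A_1,\ldots,A_n)=H(A_1+z,\ldots,A_n+z)$, giving the first line of Eq.(\ref{1121}). The second line for $\varpi$ follows by subtracting the corresponding identity for $\Pi(A_1,\ldots,A_{n-1})$.

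For part~(2), the analogous key input is the Heisenberg-picture evolution $\exp(ita^{\dagger}a)\,a^{\dagger}\exp(-ita^{\dagger}a)=e^{it}a^{\dagger}$, which gives $\exp(ita^{\dagger}a)D(A)\exp(-ita^{\dagger}a)=D(Ae^{it})$; together with $\exp(ita^{\dagger}a)\ket{0}=\ket{0}$ this yields $\exp(ita^{\dagger}a)\ket{A_j}=\ket{A_je^{it}}$. The same subspace-tracking argument produces the first line of Eq.(\ref{1121A}), and subtraction again gives the formula for $\varpi$. I do not foresee any real obstacle: one might worry that the phase factors $e^{i\varphi_j}$ must be reconciled with the $G(S)$-tensor expansion of $\Pi(S)$, but this is unnecessary because the statement lives at the level of subspaces, which only see spans. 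Equivalently, conjugation by a unitary preserves both idempotency and self-adjointness, so the transformed operator \emph{is} the projector onto the image subspace, and the image subspace is directly read off from where the generating coherent states are sent.
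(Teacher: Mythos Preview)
Your argument is correct. Both parts are sound: the subspace-tracking principle $UPU^{\dagger}=$ (projector onto $UV$) is valid for any unitary and any closed subspace, the identifications $D(z)\ket{A_j}\in H(A_j+z)$ and $\exp(ita^{\dagger}a)\ket{A_j}=\ket{A_je^{it}}$ are standard, and the subtraction $\varpi=\Pi_n-\Pi_{n-1}$ is exactly the rearranged Eq.~(\ref{alg1}).

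The route, however, is the reverse of the paper's. The paper proceeds inductively on the recursive formula (\ref{alg1}) for $\varpi(A_i\,|\,A_1,\ldots,A_{i-1})$: assuming the displacement law for $\Pi(A_1,\ldots,A_{i-1})$ (hence for $\Pi^{\perp}$), it conjugates the expression $\tau_i\,\Pi^{\perp}\Pi(A_i)\Pi^{\perp}$ term by term using Eq.~(\ref{nnn}), obtains the law for $\varpi$, and then recovers the law for $\Pi$ by summing via Eq.~(\ref{59}). You instead establish the $\Pi$-law first and in one stroke, by recognising that conjugation by a unitary simply transports the spanning set, and then obtain the $\varpi$-law by a single subtraction. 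Your approach avoids induction altogether and makes the geometric content (the projector follows the subspace) explicit; the paper's approach stays closer to the Gram--Schmidt machinery it has already set up and would generalise more readily to situations where one has recursive control over the $\varpi$'s but not a clean description of the global span.
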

\begin{proof}
\mbox{}
\begin{itemize}

\item[(1)]
For  $\varpi (A_n |A_1,...,A_{n-1})$, we prove the statement inductively using Eq.(\ref{alg1}) and the property of coherent states in Eq.(\ref{nnn}).
Then Eq.(\ref{59}) proves the statement for  $\Pi (A_1,...,A_n)$.
\item[(2)]
The proof here is similar to the one for the first case.
\end{itemize}
\end{proof} 

Coherent states are eigenstates of $a$, which we can express as $a^\ell\Pi(A_1)=A_1^{\ell}\Pi(A_1)$.
Weaker statements than this that involve the $\varpi(A_{n+1}|A_1,...,A_{n})a^{\ell}\varpi(A_{n+1}|A_1,...,A_{n})$ and also the 
trace of $a^{\ell}\Pi (A_1,...,A_{n})$ are made below.
\begin{proposition}\label{bnm}
\begin{itemize}
For $\ell=1,2,...$
\item[(1)]
\begin{eqnarray}\label{rt}
\Pi ^{\perp}(A_1,...,A_i)a^{\ell}\Pi (A_1,...,A_{i})=0.
\end{eqnarray}
\item[(2)]
\begin{eqnarray}\label{19c}
\varpi(A_{n+1}|A_1,...,A_{n})a^{\ell}\varpi(A_{n+1}|A_1,...,A_{n})=
A_{n+1}^{\ell}\varpi (A_{n+1}|A_1,...,A_{n}).
\end{eqnarray}
\item[(3)]
\begin{eqnarray}\label{19n}
&&{\rm Tr}[a^{\ell}\varpi (A_n|A_1,...,A_{n-1})]=A_n^{\ell};\;\;\;\;\;{\rm Tr}[a^{\ell}\Pi (A_1,...,A_{n})]=\sum _{i=1}^n A_i^{\ell}.
\end{eqnarray}

\end{itemize}

\end{proposition}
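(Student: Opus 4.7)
The strategy rests on a single observation: the annihilation operator $a$ preserves every coherent subspace, because $a\ket{A_j}=A_j\ket{A_j}$ and $H(A_1,\ldots,A_i)$ is the linear span of these eigenvectors. I would establish this invariance first and then cascade it through the three parts.

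For part (1), I would write an arbitrary $\ket{v}\in H(A_1,\ldots,A_i)$ as $\ket{v}=\sum_{j=1}^i v_j\ket{A_j}$ and apply $a^\ell$ term by term to conclude $a^\ell\ket{v}=\sum_j v_j A_j^\ell\ket{A_j}\in H(A_1,\ldots,A_i)$. Hence the range of $a^\ell\Pi(A_1,\ldots,A_i)$ lies inside $H(A_1,\ldots,A_i)$, which is killed on the left by $\Pi^\perp(A_1,\ldots,A_i)$. In operator form this also yields $\Pi a^\ell\Pi=a^\ell\Pi$, a fact I will reuse below.

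For part (2), I would exploit the rank-one form $\varpi(A_{n+1}|A_1,\ldots,A_n)=\ket{u_{n+1}}\bra{u_{n+1}}$ from Eq.~(\ref{alg1}), so that $\varpi a^\ell\varpi=\bra{u_{n+1}}a^\ell\ket{u_{n+1}}\,\varpi$; it suffices to compute that scalar. Substituting $\ket{u_{n+1}}=\sqrt{\tau_{n+1}}\,\Pi^\perp(A_1,\ldots,A_n)\ket{A_{n+1}}$, expanding one of the two $\Pi^\perp$'s as ${\bf 1}-\Pi$, and using part (1) to discard the cross term $\Pi^\perp a^\ell\Pi\ket{A_{n+1}}=0$, the expression collapses to $\tau_{n+1}A_{n+1}^\ell\bra{A_{n+1}}\Pi^\perp(A_1,\ldots,A_n)\ket{A_{n+1}}$, which equals $A_{n+1}^\ell$ by the defining formula for $\tau_{n+1}$.

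For part (3), the identity $\varpi a^\ell\varpi=A_n^\ell\varpi$ from part (2) together with cyclicity of the trace and $\varpi^2=\varpi$ gives $\mathrm{Tr}(a^\ell\varpi)=\mathrm{Tr}(\varpi a^\ell\varpi)=A_n^\ell$. The sum formula for $\mathrm{Tr}[a^\ell\Pi(A_1,\ldots,A_n)]$ then follows by applying this to each summand in the orthogonal decomposition of Eq.~(\ref{59}). The only delicate point is that $a$ is unbounded, but since every $\varpi$ has range spanned by a finite linear combination of coherent states (which lie in the domain of every polynomial in $a$ and $a^\dagger$), all operator products and traces above are well-defined and the manipulations are legitimate.
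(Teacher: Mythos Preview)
Your argument is correct and noticeably cleaner than the paper's. The paper establishes part~(1) by induction on~$i$: assuming $\Pi^{\perp}(A_1,\ldots,A_n)a^{\ell}\Pi(A_1,\ldots,A_n)=0$, it expands $\Pi^{\perp}(A_1,\ldots,A_{n+1})$ via Eq.~(\ref{1121a}) and verifies the three resulting cross terms (Eqs.~(\ref{zx1})--(\ref{zx3})) one by one; part~(2) then drops out as the byproduct Eq.~(\ref{zx3}), and part~(3) is obtained from the auxiliary identity Eq.~(\ref{cvb}) proved along the way. By contrast, you go straight to the structural reason: $a\ket{A_j}=A_j\ket{A_j}$ makes $H(A_1,\ldots,A_i)$ invariant under~$a^{\ell}$, so part~(1) is immediate with no induction. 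Your handling of part~(2) via the rank-one form $\varpi=\ket{u_{n+1}}\bra{u_{n+1}}$ and the scalar $\bra{u_{n+1}}a^{\ell}\ket{u_{n+1}}$ is essentially the same computation the paper performs in Eq.~(\ref{zx3}), just isolated more transparently. For part~(3) both proofs reduce to Eq.~(\ref{59}); your use of cyclicity is a slight detour (one could read off $\mathrm{Tr}(a^{\ell}\varpi)=\bra{u_n}a^{\ell}\ket{u_n}$ directly from the rank-one form), but it is perfectly valid since $a^{\ell}\varpi$ is finite rank. The payoff of your route is brevity and conceptual clarity; the paper's inductive argument, while longer, has the incidental benefit of producing the intermediate identity $\Pi^{\perp}a^{\ell}\Pi^{\perp}=\Pi^{\perp}a^{\ell}$ (Eq.~(\ref{cvb})) explicitly, though of course that also follows at once from your invariance observation.
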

\begin{proof}
\mbox{}
\begin{itemize}

\item[(1)]
We prove this inductively. 
It is easily seen that it is true for $i=1$. We assume that it is true for $i=n$, i.e., that 
\begin{eqnarray}\label{rtq}
\Pi ^{\perp}(A_1,...,A_n)a^{\ell} \Pi (A_1,...,A_{n})=0
\end{eqnarray}
and we will prove that it is true for $i=n+1$. 

From Eq.(\ref{alg1}) follows that
\begin{eqnarray}\label{1121a}
\Pi^{\perp}(A_1,...,A_{n+1})&=&\Pi ^{\perp}(A_1,...,A_{n})-\varpi(A_{n+1}|A_1,...,A_{n}).
\end{eqnarray}
Therefore we need to prove that
\begin{eqnarray}\label{prove}
&&-\varpi(A_{n+1}|A_1,...,A_{n})a^{\ell}\Pi (A_1,...,A_{n})+ \Pi ^{\perp}(A_1,...,A_{n})a ^{\ell}\varpi(A_{n+1}|A_1,...,A_{n})\nonumber\\&&-\varpi(A_{n+1}|A_1,...,A_{n})a^{\ell}\varpi(A_{n+1}|A_1,...,A_{n})=0.
\end{eqnarray}
Firstly we point out that Eq.(\ref{rtq}) leads to
\begin{eqnarray}\label{zx1}
&&\varpi(A_{n+1}|A_1,...,A_{n})a^{\ell}\Pi (A_1,...,A_{i})\nonumber\\&&=\tau_{n+1} \Pi^{\perp}(A_1,...,A_{n}) \Pi(A_{n+1})\Pi ^{\perp}(A_1,...,A_{n})a^{\ell}\Pi (A_1,...,A_{i})=0
\end{eqnarray}
where $\tau_{n+1}$ has been given in Eq.(\ref{alg10}).
Also Eq.(\ref{rtq}) leads to
\begin{eqnarray}\label{cvb}
\Pi ^{\perp}(A_1,...,A_n)a^{\ell}\Pi ^{\perp}(A_1,...,A_{n})=\Pi ^{\perp}(A_1,...,A_n)a^{\ell}
\end{eqnarray}
which we use to prove that
\begin{eqnarray}\label{zx2}
&&\Pi ^{\perp}(A_1,...,A_{n})a^{\ell}\varpi(A_{n+1}|A_1,...,A_{n})=\tau _{n+1}\Pi^{\perp}(A_1,...,A_{n}) a ^{\ell}\Pi^{\perp}(A_1,...,A_{n}) \Pi(A_{n+1})\Pi ^{\perp}(A_1,...,A_{n})\nonumber\\&&=
\tau _{n+1}\Pi^{\perp}(A_1,...,A_{n}) a ^{\ell}\Pi(A_{n+1})\Pi ^{\perp}(A_1,...,A_{n})= A_{n+1}^{\ell}\varpi (A_{n+1}|A_1,...,A_{n})
\end{eqnarray}
We also use Eq.(\ref{cvb}) to prove that
\begin{eqnarray}\label{zx3}
&&\varpi(A_{n+1}|A_1,...,A_{n})a^{\ell}\varpi(A_{n+1}|A_1,...,A_{n})\nonumber\\&&=\tau _{n+1}^2\Pi^{\perp}(A_1,...,A_{n}) \Pi(A_{n+1})\Pi ^{\perp}(A_1,...,A_{n})a^{\ell}
\Pi^{\perp}(A_1,...,A_{n}) \Pi(A_{n+1})\Pi ^{\perp}(A_1,...,A_{n})\nonumber\\&&=\tau _{n+1}^2\Pi^{\perp}(A_1,...,A_{n}) \Pi(A_{n+1})\Pi ^{\perp}(A_1,...,A_{n})a^{\ell}
 \Pi(A_{n+1})\Pi ^{\perp}(A_1,...,A_{n})\nonumber\\&&=
A_{n+1}^{\ell}\varpi (A_{n+1}|A_1,...,A_{n}).
\end{eqnarray}
From Eqs(\ref{zx1}),(\ref{zx2}),(\ref{zx3}), follows Eq.(\ref{prove}). 
This proves Eq.(\ref{rt}), and it also proves Eqs(\ref{zx3}),(\ref{cvb}) (which are used below).
\item[(2)]
This has been proved in Eq.(\ref{zx3}).

\item[(3)]
Using Eq.(\ref{cvb}) we get
\begin{eqnarray}
&&{\rm Tr}[a^{\ell}\varpi (A_n|A_1,...,A_{n-1})]=\tau _n {\rm Tr}[a^{\ell}\Pi^{\perp}(A_1,...,A_{n-1}) \Pi(A_{n})\Pi ^{\perp}(A_1,...,A_{n-1})]\nonumber\\&&=
\tau _n {\rm Tr}[\Pi ^{\perp}(A_1,...,A_{n-1})a^{\ell}\Pi^{\perp}(A_1,...,A_{n-1}) \Pi(A_{n})]\nonumber\\&&=\tau _n {\rm Tr}[\Pi ^{\perp}(A_1,...,A_{n-1})a^{\ell}\Pi(A_{n})]
=A_n^{\ell}{\rm Tr}[\varpi (A_n|A_1,...,A_{n-1})]=A_n^{\ell}
\end{eqnarray}
Then we use this result and Eq.(\ref{59}) to prove the second of Eqs(\ref{19n}).

\end{itemize}
\end{proof}

We next use Eq.(\ref{19n}) and calculate the quantities
\begin{eqnarray}
&&{\rm Tr}[x\Pi (A_1,...,A_{n})]=\sqrt{2}\Re\left (\sum A_i\right )\nonumber\\
&&{\rm Tr}[p\Pi (A_1,...,A_{n})]=\sqrt{2}\Im\left (\sum A_i\right )
\end{eqnarray}
They are generalizations of similar results for coherent states.
We also use Eq.(\ref{35}) to get
\begin{eqnarray}
{\rm Tr}[a^{\dagger}a\Pi (A_1,A_2)]&=&|A_1|^2+|A_2|^2+{\mathfrak s}(|A_1-A_2|)
\end{eqnarray}
where
\begin{eqnarray}
{\mathfrak s}(|A|)&=&\frac{|A|^2}{\exp (|A|^2)-1}=\frac{1}{1+\frac{|A|^2}{2!}+\frac{|A|^4}{3!}+...};\;\;\;\;0< {\mathfrak s}(|A|)<1
\end{eqnarray}
In the limit $|A|\;\rightarrow\;0$ we get ${\mathfrak s}(|A|)=1$, and in the limit $|A|\;\rightarrow\;\infty$ we get ${\mathfrak s}(|A|)=0$.

If $\rho _0$ is a density matrix, we use the projectors  $\Pi(A_1,...,A_{n})$ to define generalized $Q$-functions as
\begin{eqnarray}
Q(A_1,...,A_{n})={\rm Tr}[\Pi(A_1,...,A_{n})\rho _0].
\end{eqnarray}
Using Eq.(\ref{985A}) we prove that
\begin{eqnarray}\label{9850}
\int _{\mathbb C}\frac{d^2A}{n\pi}Q(A,A+d_2,...,A+d_n)={1}.
\end{eqnarray}
In the case of the harmonic oscillator Hamiltonian $a^{\dagger}a$, the density matrix evolves in time as
\begin{eqnarray}
\rho (t)=\exp(ita^{\dagger}a)\rho _0 \exp (-ita^{\dagger}a)
\end{eqnarray}
and the corresponding $Q$-function, as a function of time is (Eq.(\ref{1121A}))
\begin{eqnarray}
{\rm Tr}[\Pi(A_1,...,A_{n})\exp(ita^{\dagger}a )\rho _0\exp(-ita^{\dagger}a )]=Q[A_1 \exp(-it),...,A_{n}\exp(-it)].
\end{eqnarray}
Also if we act with the displacement operators on both side of the density matrix, we get the density matrix $D(z) \rho _0 [D(z)]^{\dagger}$,
and the corresponding $Q$-function, as a function is (Eq.(\ref{1121}))
\begin{eqnarray}
{\rm Tr}[\Pi(A_1,...,A_{n})D(z) \rho _0 D^{\dagger}(z)]=Q(A_1-z,...,A_{n}-z).
\end{eqnarray}

\section{Coherent projectors in the Dirac contour representation}

The following proposition is  useful in practical calculations that involve coherent projectors.
\begin{proposition}\label{pro10}
If $S=\{A_1,...,A_i\}$, the $\Pi(S)$ is represented in the Dirac contour representation by the function:
\begin{eqnarray}\label{4zx}
&&\Pi(A_1,...,A_i)\;\rightarrow\;
\Theta(z_1,z_2)=\sum _{j,k}G_{jk}(S)\frac{\exp \left (A_jz_1-\frac{1}{2}|A_j|^2-\frac{1}{2}|A_k|^2\right )}{z_2-A_k^*}\nonumber\\
&&k,j=1,...,i;\;\;\;|z_2|>\max(|A_1|,...,|A_i|).
\end{eqnarray}
Here $G (S)$ is the inverse of the matrix $g(S)$, given in Eq.(\ref{pd}).
The set of poles of $\Theta(z_1,z_2)$ with respect to the variable $z_2$ is $S^*$, and the number $i$ of poles, is equal to the rank of $\Pi(S)$.
\end{proposition}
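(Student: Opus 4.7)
The plan is to derive the contour representation directly from an explicit matrix-element formula for $\Pi(S)$, together with the Dirac representation of a rank-one operator $\ket{A_j}\bra{A_k}$ given in Eq.(\ref{26}).

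First, I would establish the general form
\[
\Pi(S) \;=\; \sum_{j,k=1}^{i} G_{jk}(S)\,\ket{A_j}\bra{A_k},
\]
which generalises the $i=2$ formula already displayed in the paper. This can be checked directly: the operator on the right is Hermitian (since $G(S)$ is Hermitian), its range lies inside $H(S)$, and the identity $g(S)G(S)=I$ immediately yields both idempotency and $\Pi(S)\ket{v}=\ket{v}$ for every $\ket{v}=\sum_{\ell} v_\ell \ket{A_\ell}\in H(S)$ (the crucial step being $\sum_{k,\ell}G_{jk}\,g_{k\ell}\,v_\ell = v_j$). Uniqueness of the orthogonal projector onto $H(S)$ then forces the equality.

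Second, I would invoke the rule from Eq.(\ref{26}),
\[
\ket{A_j}\bra{A_k}\;\rightarrow\;\frac{\exp\bigl(A_j z_1 - \tfrac{1}{2}|A_j|^2 - \tfrac{1}{2}|A_k|^2\bigr)}{z_2 - A_k^*},\qquad |z_2|>|A_k|,
\]
and use linearity of the Dirac contour representation to sum over $j,k$; this reproduces the stated $\Theta(z_1,z_2)$. The combined convergence condition $|z_2|>\max_k |A_k|$ is just the intersection of the individual restrictions, matching the contour requirement in the proposition.

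Third, I would verify that no pole cancels, so that the number of poles in $z_2$ really equals the rank $i$ of $\Pi(S)$. The residue at $z_2 = A_k^*$ is
\[
\sum_{j} G_{jk}(S)\,\exp\!\bigl(A_j z_1 - \tfrac{1}{2}|A_j|^2 - \tfrac{1}{2}|A_k|^2\bigr),
\]
viewed as a function of $z_1$. If it vanished identically, linear independence of the exponentials $\exp(A_j z_1)$ (the Bargmann form of Proposition \ref{2wd}) would force the $k$-th column of $G(S)$ to be zero, contradicting the invertibility of $G(S)$. So every $A_k^*$ is a genuine pole and the pole count equals $i$. The main technical point is this non-cancellation of residues; everything else is a direct substitution together with the standard projector-uniqueness argument.
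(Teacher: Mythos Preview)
Your proof is correct, but it proceeds along a genuinely different line from the paper's own argument. The paper first shows \emph{inductively}, via the Gram--Schmidt recursion of Eq.~(\ref{alg1}) and the product rule Eq.~(\ref{product}), that the Dirac contour function of $\Pi(A_1,\dots,A_j)$ must have the shape $\sum_k f_k(z_1)/(z_2-A_k^*)$; only afterwards does it determine the unknown functions $f_k$ by imposing $\Pi(S)\ket{A_j}=\ket{A_j}$ as a linear system and inverting $g(S)$. In contrast, you bypass the induction entirely: you first prove the operator identity $\Pi(S)=\sum_{j,k}G_{jk}(S)\ket{A_j}\bra{A_k}$ by the projector-uniqueness argument, and then simply read off $\Theta(z_1,z_2)$ from Eq.~(\ref{26}) by linearity. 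Your route is shorter and more transparent, and it makes the pole structure immediate via the residue/non-cancellation argument (which the paper does not spell out). The paper's route, on the other hand, keeps the derivation entirely inside the Dirac contour calculus and illustrates how that machinery handles the Gram--Schmidt step, which is of independent interest there. Note also that the paper places the identity $\Pi(S)=\sum G_{jk}\ket{A_j}\bra{A_k}$ \emph{after} this proposition (Eq.~(\ref{1qa})), as a corollary; you have essentially reversed the logical order, which is perfectly legitimate.
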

\begin{proof}
We first prove inductively that
\begin{eqnarray}\label{309}
\Pi(A_1,...,A_{j})\;\rightarrow\;\sum _{k=1}^{j}\frac{f_k(z_1)}{z_2-A_k^*}.
\end{eqnarray}
Here the $f_k(z_1)$ depend on $A_1,...,A_{j}$, but for simplicity we do not show this in the notation.

We have seen in Eq.(\ref{26}) that the $\Pi(A_1)$ is represented by such a function.
We assume that this is true for $j=i-1$ and we will prove that it is true for $j=i$.
 
From Eq.(\ref{309}) with $j=i-1$ (and Eq.(\ref{26}) for the operator ${\bf 1}$), it follows that 
\begin{eqnarray}\label{319}
\Pi^{\perp}(A_1,...,A_{i-1})\;\rightarrow\;
\frac{1}{z_2-z_1}-\sum _{k=1}^{i-1}\frac{f_k(z_1)}{z_2-A_k^*}.
\end{eqnarray}
Using Eq.(\ref{alg1}) in conjuction with Eq.(\ref{product}) we get
\begin{eqnarray}\label{alg11}
&&\varpi (A_i |A_1,...,A_{i-1})=\tau _i\Pi  ^{\perp} (A_1,...,A_{i-1}) \Pi (A_i)\Pi ^{\perp}(A_1,...,A_{i-1})
\nonumber\\&&\;\rightarrow\;\tau _i
\oint _C\frac{dz_2}{2\pi i} \oint _C\frac{dz_3}{2\pi i}\left (\frac{1}{z_2-z_1}-\sum _{k=1}^{i-1}\frac{f_k(z_1)}{z_2-A_k^*}\right)
\frac{\exp\left (A_iz_2-|A_i|^2\right )}{z_3-A_i^*}
\left (\frac{1}{z_4-z_3}-\sum _{k=1}^{i-1}\frac{f_k(z_3)}{z_4-A_k^*}\right)
\end{eqnarray}
Although there are many terms in these contour integrals, it is straightforward to see that the result is of the form
 $f_k(z_1)/(z_4-A_k^*)$ with $k=1,...,i-1$ and also with $k=i$ (this last term does not appear in Eq.(\ref{319})).
Adding these terms to Eq.(\ref{309}) with $j=i-1$, proves that Eq.(\ref{309}) also holds for $j=i$.

In order to prove Eq.(\ref{4zx}), we use the fact that
\begin{eqnarray}
\Pi(A_1,...,A_{i})\ket{A_j}=\ket{A_j};\;\;\;\;j=1,...,i.
\end{eqnarray}  
Using Eq.(\ref{1959}) we write this as
\begin{eqnarray}\label{3090}
\sum _k\oint _C \frac{dz_2 }{2\pi i} \frac{f_k(z_1)}{z_2-A_k^*}\exp(A_jz_2 -\frac{1}{2}|A_j|^2)=\exp(A_jz_1-\frac{1}{2}|A_j|^2).
\end{eqnarray}
From this follows that
\begin{eqnarray}
\sum _{k=1}^if_k(z_1)\exp \left(\frac{1}{2}|A_k|^2\right)g_{kj}(S)=\exp \left (A_jz_1-\frac{1}{2}|A_j|^2\right ).
\end{eqnarray}
This is a system of $i$ equations with $i$ unknowns, which gives 
\begin{eqnarray}\label{qzx}
f_k(z_1)=\sum _{j=1}^i\exp \left (A_jz_1-\frac{1}{2}|A_j|^2-\frac{1}{2}|A_k|^2\right )G_{jk}(S).
\end{eqnarray}
\end{proof}

\begin{cor}
Let $S_1=\{A_1,...,A_j\}$ and $S_2=\{B_1,...,B_i\}$ (in general $S_1\cap S_2\ne\emptyset$).
Also let
\begin{eqnarray}
&&\Pi(S_1)\;\rightarrow\;\sum _{r,s=1}^jG_{rs}(S_1)\frac{\exp \left (A_rz_1-\frac{1}{2}|A_r|^2-\frac{1}{2}|A_s|^2\right )}{z_2-A_s^*}
;\nonumber\\&&
\Pi(S_2)\;\rightarrow\;\sum _{\ell , m=1}^iG_{\ell m}(S_2)\frac{\exp \left (B_{\ell}z_1-\frac{1}{2}|B_{\ell}|^2-\frac{1}{2}|B_m|^2\right )}{z_2-B_m ^*}
\end{eqnarray}
be coherent projectors. 
The trace of the product of these coherent projectors, is given by
\begin{eqnarray}
{\rm Tr}[\Pi(S_1)\Pi(S_2)]=\sum _{r,s,\ell,m}G_{rs}(S_1)G_{\ell m}(S_2)\exp \left (A_rB_m^*+A_s^*B_\ell-\frac{1}{2}|A_r|^2-\frac{1}{2}|A_s|^2-\frac{1}{2}|B_{\ell}|^2-\frac{1}{2}|B_m|^2\right ).
\end{eqnarray}
\end{cor}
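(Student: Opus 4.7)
The plan is to combine the product formula Eq.(\ref{product}) with the trace formula Eq.(\ref{tr}), feeding in the explicit contour representations of $\Pi(S_1)$ and $\Pi(S_2)$ supplied by Proposition \ref{pro10}. Concretely, I would start from
\begin{equation*}
{\rm Tr}[\Pi(S_1)\Pi(S_2)] = \oint_{C_1}\frac{dz_1}{2\pi i}\oint_{C_2}\frac{dz_2}{2\pi i}\oint_{C_3}\frac{d\zeta}{2\pi i}\,\frac{\Pi(S_1)(z_1,\zeta)\,\Pi(S_2)(\zeta,z_2)}{z_2-z_1},
\end{equation*}
with the contours nested so that the annulus of $C_2$ sits inside that of $C_1$, and $C_3$ encloses all the poles $\zeta=A_s^*$.

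The calculation then proceeds by three successive residue evaluations. First, the $\zeta$-integration: $\Pi(S_1)(z_1,\zeta)$ has, as a function of $\zeta$, only the simple poles $\zeta=A_s^*$, while $\Pi(S_2)(\zeta,z_2)$ is entire in $\zeta$ (its $\zeta$-dependence is the factor $\exp(B_\ell\zeta)$), so the residue at each $A_s^*$ produces a factor $\exp(A_s^*B_\ell)$ together with the half-square normalizations. Second, the $z_2$-integral picks up the pole at $z_2=B_m^*$ (the would-be pole at $z_2=z_1$ lies outside $C_2$ since $|z_1|>|z_2|$), giving $1/(B_m^*-z_1)$. Third, the $z_1$-integral picks up the remaining pole at $z_1=B_m^*$ and produces the factor $\exp(A_rB_m^*)$. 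Summing over $r,s,\ell,m$ with the weights $G_{rs}(S_1)G_{\ell m}(S_2)$ assembles the claimed expression.

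A more economical alternative, in case the triple-contour bookkeeping becomes unwieldy, is to appeal to the general identity $\Pi(S)=\sum_{j,k}G_{jk}(S)\ket{A_j}\bra{A_k}$ (the $i$-dimensional analogue of the formula displayed just below Eq.(\ref{alg}); it is in fact precisely what Eq.(\ref{qzx}) encodes). This reduces the trace to
\begin{equation*}
{\rm Tr}[\Pi(S_1)\Pi(S_2)] = \sum_{r,s,\ell,m} G_{rs}(S_1)\,G_{\ell m}(S_2)\,\langle A_s\ket{B_\ell}\langle B_m\ket{A_r},
\end{equation*}
and the overlap formula Eq.(\ref{matr}) then yields the result at once.

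I expect no real obstacle here: both routes are essentially mechanical. The one place requiring care is combinatorial bookkeeping — keeping the four prefactors $-\tfrac12|A_r|^2$, $-\tfrac12|A_s|^2$, $-\tfrac12|B_\ell|^2$, $-\tfrac12|B_m|^2$ straight, and checking that the nested-contour conditions $|z_1|>|z_2|>\max|B_m^*|$ together with the enclosure of the $A_s^*$ by $C_3$ can all be met simultaneously (which they can, since each index set is finite).
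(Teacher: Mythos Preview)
Your proposal is correct. The paper states this result as a corollary of Proposition~\ref{pro10} with no proof given, so it is treated as an immediate consequence; both of your routes --- the triple contour evaluation via Eqs.~(\ref{product}) and (\ref{tr}), and the ket--bra expansion $\Pi(S)=\sum_{j,k}G_{jk}(S)\ket{A_j}\bra{A_k}$ followed by Eq.~(\ref{matr}) --- are exactly the kinds of direct computation the corollary is meant to encode, and either would be accepted as the intended argument.
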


We next consider operators of the type described in Eq.(\ref{4zx}), but we 
replace the $G_{jk}(S)$ with arbitrary coefficients. We show 
that they `live\rq{} entirely within the space $H(A_1,...,A_{i})$, in the sense of the following proposition:
\begin{proposition}
Let $\Theta$ be operators which in the Dirac contour representation are represented by the functions 
\begin{eqnarray}
\Theta (z_1,z_2)=\sum _{j,k}\lambda_{jk}\frac{\exp \left (A_jz_1-\frac{1}{2}|A_j|^2-\frac{1}{2}|A_k|^2\right )}{z_2-A_k^*};\;\;\;j,k=1,...,i.
\end{eqnarray}
where $\lambda _{jk}$ are arbitrary complex numbers. Then 
\begin{eqnarray}\label{5cv}
\Pi(A_1,...,A_{i})\Theta \Pi(A_1,...,A_{i})=\Theta.
\end{eqnarray}
We denote the set of these operators as ${\cal A}(A_1,...,A_{i})$ (or ${\cal A}(S)$ with
$S=\{ A_1,...,A_{i}\}$).
\end{proposition}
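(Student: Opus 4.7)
The plan is to reduce the claim to the obvious operator identity
\begin{eqnarray}
\Theta=\sum_{j,k=1}^{i}\lambda_{jk}\ket{A_j}\bra{A_k},\qquad \Pi(S)\ket{A_j}=\ket{A_j},\qquad \bra{A_k}\Pi(S)=\bra{A_k},
\end{eqnarray}
from which Eq.(\ref{5cv}) is immediate. So the entire content is the first identification: one has to recognize that the given contour kernel is precisely the Dirac symbol of the finite rank operator above.

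First I would compare the summand in the definition of $\Theta(z_1,z_2)$ with the third line of Eq.(\ref{26}), which gives the Dirac symbol of $\ket{A_j}\bra{A_k}$. The two match term by term, so the contour symbol
\begin{eqnarray}
\sum_{j,k=1}^{i}\lambda_{jk}\,\frac{\exp\!\left(A_jz_1-\tfrac{1}{2}|A_j|^2-\tfrac{1}{2}|A_k|^2\right)}{z_2-A_k^*}
\end{eqnarray}
represents the operator $\Theta=\sum_{j,k}\lambda_{jk}\ket{A_j}\bra{A_k}$. Since each $\ket{A_j}$ and each $\ket{A_k}$ lies in $H(A_1,\dots,A_i)$, the projector $\Pi(A_1,\dots,A_i)$ acts as the identity on them from the left and right respectively, and Eq.(\ref{5cv}) follows.

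If one prefers to stay inside the Dirac formalism rather than invoke the underlying Hilbert-space picture, an equivalent route is to compute $\Pi(S)\Theta$ using the product formula (\ref{product}), with $\Pi(S)$ expressed as in Proposition \ref{pro10}. The contour integral in the intermediate variable $\zeta$ picks up residues at the poles $\zeta=A_s^*$ of $\Pi(S)(z_1,\zeta)$, and the resulting exponential factor combines with $\tfrac{1}{2}|A_s|^2$ to give exactly $g_{sj}(S)$ as in Eq.(\ref{matr}). Summation over $s$ then uses $\sum_{s}G_{rs}(S)g_{sj}(S)=\delta_{rj}$, which collapses the double sum back to $\Theta(z_1,z_2)$. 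The same computation applied on the right (integrating over the $z_2$-poles of $\Theta$ against $\Pi(S)$) yields $\Theta\Pi(S)=\Theta$.

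The main potential obstacle is verifying that the contours can be chosen so that both $|z_2|$-convergence constraints of Proposition \ref{pro10} hold simultaneously in the iterated integration for $\Pi(S)\Theta\Pi(S)$; but since all the poles in the integration variables sit at the common set $S^*$, a single annulus enclosing $\{A_1^*,\dots,A_i^*\}$ suffices for every factor, so this is only a bookkeeping issue and not a real difficulty. The first (operator) route makes even this moot.
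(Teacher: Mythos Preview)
Your proposal is correct and follows essentially the same approach as the paper: identify $\Theta=\sum_{j,k}\lambda_{jk}\ket{A_j}\bra{A_k}$ by comparison with Eq.(\ref{26}), then note that $\Pi(S)$ fixes each $\ket{A_j}$ and $\bra{A_k}$. The paper also mentions (without details) the alternative contour-integral route you sketch in your second paragraph.
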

\begin{proof}
These operators are
\begin{eqnarray}
\Theta (z_1,z_2)\;\leftrightarrow \;\Theta=\sum \lambda_{jk}\ket{A_j}\bra{A_k}.
\end{eqnarray}
Therefore $\Theta$ obeys Eq.(\ref{5cv}).
An alternative proof that uses the Dirac contour representation of these operators and Eq.(\ref{prod}), can also be given.  
\end{proof}
\begin{remark}
More general operators
\begin{eqnarray}
\Theta (z_1,z_2)=\sum _{j,k}\lambda_{jk}\frac{f_k(z_1)}{z_2-A_k^*};\;\;\;j,k=1,...,i,
\end{eqnarray}
do not belong in general to ${\cal A}(A_1,...,A_{i})$, and 
do not obey Eq.(\ref{5cv}). We exemplify this with functions with one pole
\begin{eqnarray}
\Theta=\ket{N}\bra{A}\;\leftrightarrow \;\frac{\exp (-\frac{1}{2}|A|^2)}{\sqrt{N!}}\frac{z_1^N}{z_2-A}.
\end{eqnarray}
In this case $\Pi(A)\Theta \Pi(A)$ is not equal to $\Theta$.
\end{remark}

\subsection{The non-orthogonal basis of coherent states in $H(S)$}

Let $S=\{ A_1,...,A_{i}\}$.
The coherent states $\ket{A_1}, ...,\ket{A_i}$ are a non-orthogonal basis in the $i$-dimensional space $H(S)$, and the
$g_{jk}(S)$ ($G_{jk}(S)$) are the metric (inverse of the metric) associated with this basis. 
An arbitrary state $\ket{s}$ is $H(S)$, can be expanded {\bf uniquely} in this basis as
\begin{eqnarray}\label{123}
\ket{s}=\sum _{j}s_j\ket{A_j};\;\;\;\;s_j=\sum _{k}G_{jk}(S)\bra{A_k}s\rangle.
\end{eqnarray}
Eq.(\ref{4zx}) can be rewritten as
\begin{eqnarray}\label{1qa}
\Pi(S)=\sum _{j,k}G_{jk}(S)\ket{A_j}\bra{A_k}.
\end{eqnarray}
In the following we represent quantum states with their components in the non-orthogonal basis of coherent states,
and operators with their matrix elements in this basis.
This means that operations between them will involve the matrices $g,G$ and the following lemma
summarizes the main relations that we need later.
\begin{lemma}
Let $\Theta$, $\Phi$ be operators in ${\cal A}(S)$ (i.e., $\Pi(S)\Theta \Pi(S)=\Theta$, and similarly for $\Phi$).
Also let
\begin{eqnarray}
\Theta _{k\ell}=\bra{A_k}\Theta \ket{A_{\ell}};\;\;\;\Phi _{k\ell}=\bra{A_k}\Phi \ket{A_{\ell}}\nonumber\\
\end{eqnarray}
Then:
\begin{itemize}
\item[(1)]
$\Theta$ is related to  $\Theta _{k\ell}$, as follows:
\begin{eqnarray}
\Theta=\sum _{j,k,m,n}G_{km}(S){\Theta}_{mn}G_{n\ell}(S)\ket {A_k}\bra{A_\ell}
\end{eqnarray}
\item[(2)]
In the non-orthogonal basis of coherent states,  the vector $\ket{t}=\Theta \ket{s}$ (where $\ket{s}$ is given in Eq.(\ref{123}), is represented by its components
\begin{eqnarray}\label{bbb}
t_j=\sum _{k, \ell}G_{jk}(S)\Theta _{k\ell}s_{\ell}.
\end{eqnarray}
\item[(3)]
The product $\Theta \Phi$ is represented with its matrix elements as:
\begin{eqnarray}\label{150}
(\Theta \Phi)_{k\ell}=\sum _{m,n}\Theta _{km}G_{mn}(S)\Phi _{n\ell},
\end{eqnarray}
and the commutator as
\begin{eqnarray}\label{153}
[\Theta ,\Phi]_{k\ell}=\sum _{m,n}\Theta _{km}G_{mn}(S)\Phi _{n\ell}-\sum _{m,n}\Phi _{km}G_{mn}(S)\Theta _{n\ell}.
\end{eqnarray}
\item[(4)]
If $\Theta$ is a unitary matrix, the unitarity relation $\Theta \Theta ^{\dagger}={\bf 1}$, is expressed as
\begin{eqnarray}\label{uni}
\sum _{m,n}\Theta _{km}G_{mn}(S)(\Theta ^{\dagger}) _{n\ell}=g_{k \ell}(S).
\end{eqnarray}
\item[(5)]
The projector $\Pi(S)$, which can be viewed as the unit operator acting only on $H(S)$, has elements
\begin{eqnarray}\label{511}
{[\Pi(S)]}_{k\ell}=\bra{A_k}{\Pi(S)}\ket{A_l}=g_{k\ell}(S). 
\end{eqnarray}
\end{itemize}
\end{lemma}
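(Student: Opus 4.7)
The plan is to derive all five parts from the single structural identity $\Theta=\Pi(S)\Theta\Pi(S)$ (and the analogous one for $\Phi$), together with the explicit representation $\Pi(S)=\sum_{j,k}G_{jk}(S)\ket{A_j}\bra{A_k}$ of Eq.(\ref{1qa}), the overlap definition $g_{jk}(S)=\langle A_j\ket{A_k}$ of Eq.(\ref{matr}), and the duality $\sum_n G_{jn}(S)g_{nk}(S)=\delta_{jk}$ which is implicit in Eq.(\ref{pd}). Once (5) and (1) are in place, parts (2)--(4) reduce to inserting resolutions of $\Pi(S)$ and contracting $g$ against $G$.

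I would dispose of (5) first: since $\ket{A_\ell}\in H(S)$, the projector acts as the identity on it, so $\Pi(S)\ket{A_\ell}=\ket{A_\ell}$ and $[\Pi(S)]_{k\ell}=\langle A_k\ket{A_\ell}=g_{k\ell}(S)$. For (1), I substitute Eq.(\ref{1qa}) for each of the two outer factors in $\Theta=\Pi(S)\Theta\Pi(S)$, obtaining
$\Theta=\sum_{k,\ell,m,n}G_{km}(S)\Theta_{mn}G_{n\ell}(S)\ket{A_k}\bra{A_\ell}$,
which is the stated expression (the sum running over $k,\ell,m,n$).

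For (2), I apply this expansion of $\Theta$ to $\ket{s}=\sum_\ell s_\ell\ket{A_\ell}$; the overlap $\langle A_\ell\ket{A_{\ell'}}=g_{\ell\ell'}(S)$ contracts with $G_{n\ell}(S)$ to give $\delta_{n\ell'}$, collapsing two of the four sums and leaving the stated formula $t_j=\sum_{k,\ell}G_{jk}(S)\Theta_{k\ell}s_\ell$. For (3), the key observation is that $\Pi(S)\Theta\Pi(S)=\Theta$ combined with $\Pi(S)^2=\Pi(S)$ yields the one-sided absorption $\Theta\Pi(S)=\Theta=\Pi(S)\Theta$ (multiply by $\Pi(S)$ on either side); hence $\Theta\Phi=\Theta\Pi(S)\Phi$, and inserting $\Pi(S)=\sum_{m,n}G_{mn}(S)\ket{A_m}\bra{A_n}$ between the two factors and taking matrix elements immediately yields $(\Theta\Phi)_{k\ell}=\sum_{m,n}\Theta_{km}G_{mn}(S)\Phi_{n\ell}$. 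The commutator identity follows by antisymmetrization. Finally, for (4), I specialize (3) to $\Phi=\Theta^\dagger$: unitarity of $\Theta$ viewed as an operator on $H(S)$ means $\Theta\Theta^\dagger=\Pi(S)$, and by (5) the matrix elements equal $g_{k\ell}(S)$.

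The argument is a mechanical translation of operator identities into the non-orthogonal basis, and no step is conceptually deep. The only subtle point, and the main place I would be careful about bookkeeping, is the one-sided absorption $\Pi(S)\Theta=\Theta=\Theta\Pi(S)$ in part (3); this is what licenses inserting a single $\Pi(S)$ (rather than two) between $\Theta$ and $\Phi$, and is what keeps the final formula free of an extra $g$ factor that would otherwise spoil the analogy with ordinary matrix multiplication.
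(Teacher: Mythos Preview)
Your proposal is correct and follows precisely the approach the paper indicates: its entire proof reads ``The proof of all these relations is based on Eq.(\ref{1qa}),'' and you have simply carried out that program in detail, inserting $\Pi(S)=\sum_{j,k}G_{jk}(S)\ket{A_j}\bra{A_k}$ wherever needed and contracting $g$ against $G$. Your careful justification of the one-sided absorption $\Pi(S)\Theta=\Theta=\Theta\Pi(S)$ in part (3) is a point the paper leaves implicit, so your write-up is in fact more complete than the original.
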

\begin{proof}
The proof of all these relations is based on Eq.(\ref{1qa}).
\end{proof}
This formalism is in the spirit of the Berezin\cite{BERE} covariant and contravariant symbols.
We can introduce dual quantities as
\begin{eqnarray}
\Theta=\sum _{j,k}{\widetilde \Theta}_{jk}\ket {A_j}\bra{A_k}.
\end{eqnarray}
The ${\widetilde \Theta}_{jk}$ are related to ${\Theta}_{jk}$ as
\begin{eqnarray}
\Theta _{k\ell}=\sum _{m,n}g_{km}(S){\widetilde \Theta}_{mn}g_{n\ell}(S) ;\;\;\;\;
{\widetilde \Theta}_{k\ell}=\sum _{m,n}G_{km}(S){\Theta}_{mn}G_{n\ell}(S),
\end{eqnarray}
but we will not use them.
A notation with upper and lower indices (and the rule that we only contract 
a lower index with an upper index, and we do not contract two lower indices or two upper indices) would `hide' the $g,G$ matrices, but we do not use it here.

\section{The Boolean ring of finite sets and classical gates}\label{X2}

In this section we consider finite sets of complex numbers, 
which in our context are finite sets of poles, and are related to coherent subspaces.
Following Stone \cite{S,S1,JS}, we study their structure from a lattice theory point of view, and
from a ring theory point of view.
Later in section \ref{X1}, all these results are transfered isomorphically to the coherent subspaces.

We show that the set of finite sets of complex numbers, is a distributive lattice $\cal L$.
We then prove that $\cal L$ can be described as particular type of ring, known as a Boolean ring.
In the case of distributive lattices which are Boolean algebras, the corresponding Boolean rings have identity.
But in our case the distributive lattice $\cal L$ is not a Boolean algebra, and the Boolean ring does not have identity.
For this reason, we use the weaker definition of a ring, which does not require the existence of identity.
However we also consider sublattices of $\cal L$, which are Boolean algebras and they do have identity.

The formalism provides the theoretical foundation for the study of classical gates and also quantum gates with coherent states.

\subsection{The distributive lattice $\cal L$}
We consider the set $\cal L$ of all finite subsets of ${\mathbb C}$ (the empty set $\emptyset$ is an element of $\cal L$).
For $S_1, S_2 \in \cal L$, we define the subset, the union and the intersection as 
the partial order $\prec$, disjunction $\vee$, and conjunction $\wedge$, correspondingly:
\begin{eqnarray}\label{mmm}
S_1\prec S_2\;\leftrightarrow\;S_1\subset S_2;\;\;\;\;
S_1\vee S_2=S_1\cup S_2;\;\;\;\;\;
S_1\wedge S_2=S_1\cap S_2
\end{eqnarray}
The symbols $\subset$ and $\prec$, include equality.
These operations are performed only a finite number of times. Then 
$\cal L$ is closed under these operations, and it is a distributive lattice.
$\cal L$ has $0$ (least element) which is the empty set $\emptyset$.
$\cal L$ does not have $1$ (greatest element), and we can not define complements (the $\mathbb C \setminus S$ does not belong to ${\cal L}$). 
We can define the relative complement of a set $S$ in an interval $\emptyset \subset S\subset R$, with respect to $R$, which is the
${\overline S}=R\setminus S$. Clearly $\cal L$ is not a Boolean algebra.

We have seen in definition \ref{defi} that the coherent bra functions have poles in a finite set $R\subset {\mathbb C}$ or a subset of $R$.
This motivates the study of all subsets of $R$, and leads to the concept of a principal ideal in $\cal L$, which is 
 the powerset (set of subsets) of $R$:
\begin{eqnarray}\label{aqq}
{\cal I}(R)=\{S\in{\cal L}\;|\;S\subset R\}
\end{eqnarray}
The cardinality of ${\cal I}(R)$ is $2^{|R|}$.
${\cal I}(R)$ is a Boolean algebra with the set $R$ as $1$.
The complement of an element $S$ is $\overline S=R\setminus S$.

\subsection{$\cal L$ as a Boolean ring}\label{boole}

In the set  $\cal L$ of finite sets $S$ of complex numbers, we define multiplication as intersection, and addition as symmetric difference:
\begin{eqnarray}
&&S_1+S_2=(S_1\setminus S_2)\cup (S_2\setminus S_1)\nonumber\\
&&S_1\cdot S_2=S_1\cap S_2
\end{eqnarray}

The $S_1\cdot S_2=S_1\cap S_2$ is the logical AND operation, the $S_1\cup S_2$ is the logical OR operation,
and the $S_1+S_2$ is the logical XOR (excluded OR) operation.
In this section we have replaced the $S_1\cup S_2$ (logical OR) with the  $S_1+S_2$ (logical XOR).
The merit of doing this, is that we get a Boolean ring, with the properties given below.
The $S_1\cup S_2$ can be expressed in terms of addition and multiplication, as
\begin{eqnarray}\label{73}
S_1\cup S_2=S_1+S_2+(S_1\cdot S_2).
\end{eqnarray}
Only finite sums and finite products are considered and then $\cal L$ is closed under multiplication and addition.
Both addition and multiplication are commutative and associative.
Also distributivity  holds:
\begin{eqnarray}\label{A1}
S_1\cdot (S_2+S_3)=(S_1\cdot S_2)+(S_1\cdot S_3)
\end{eqnarray}
The $\emptyset$ plays the role of additive zero. The additive inverse of a set is the set itself ($S_1=-S_1$) and therefore
\begin{eqnarray}
S_2-S_1=S_2+S_1=S_1-S_2=-S_1-S_2.
\end{eqnarray}
From Eq.(\ref{73}) it follows that
\begin{eqnarray}\label{74}
S_1+S_2=(S_1\cup S_2)-(S_1\cdot S_2)=(S_1\cup S_2)+(S_1\cdot S_2).
\end{eqnarray}
It is easily seen that
\begin{eqnarray}\label{A2}
S_1+\emptyset=S_1;\;\;\;S_1+S_1=\emptyset;\;\;\;\;S_1\cdot S_1=S_1.
\end{eqnarray}
The easiest way to prove these properties is using Venn diagrams.
The multiplication is idempotent.
Therefore $\cal L$ is a commutative ring with the extra property of idempotent multiplication.
The following relation also holds:
\begin{eqnarray}\label{A3}
&&(S_1\cdot S_2)\cdot (S_1+S_2)=\emptyset.
\end{eqnarray}
Therefore all the elements of this ring, are divisors of zero.
Also
\begin{eqnarray}\label{ineq}
S_1\prec S_2\;\rightarrow\;S_1\cdot S_0\prec S_2\cdot S_0.
\end{eqnarray}
But $S_1\prec S_2$ does not imply $S_1+ S_0\prec S_2+ S_0$.

A ring which has idempotent multiplication is commutative, and it is called Boolean ring \cite{S,JS}.
Boolean rings with an identity are Boolean algebras. In our case ${\cal L}$ does not have an identity, and it is not a Boolean algebra.

It is easily seen that the ideals ${\cal I}(R)$ defined in Eq.(\ref{aqq}) within lattice theory, are also ideals within ring theory (with the addition and multiplication of their elements defined above). In fact they
are Boolean rings with the set $R$ as identity, i.e., they are Boolean algebras. 
The complement of an element  $S\in {\cal I}(R)$, is ${\overline S}=S+R=R\setminus S$.

\subsection{Reversible classical gates and the CNOT gate}\label{VV}

A classical gate is a function ${\cal M}$ that maps an input $(A_1,...,A_n)$ to an output $(B_1,...,B_m)$:
\begin{eqnarray}
{\cal M}(A_1,...,A_n)=(B_1,...,B_m)
\end{eqnarray}
In most of the literature the input and output  variables $A_i,B_j$, are binary.
The case where they take $d$ values has also been studied, but to a lesser extent (e.g, \cite{D1}
in a classical context, and \cite{D2,D3} in a quantum context). 

In our case the inputs and outputs $A_i,B_j$ are 
finite subsets of ${\mathbb C}$ which are  
elements of a principal ideal ${\cal I}(R)$.
Therefore a gate is a function $\cal M$ from $[{\cal I}(R)]^n$ to $[{\cal I}(R)]^m$ (where $[{\cal I}(R)]^n$ is the Cartesian product of $n$ of these sets).
If $R$ has cardinality $|R|=1$, then the inputs and outputs are binary variables (the $\emptyset$, $R$ which can be represented with $0,1$).
For  $|R|\ge 2$,
the inputs and outputs take one of $d=2^{|R|}$ values, and they can be represented with $0,...,2^{|R|}-1$.
This leads to generalizations of the classical gates with binary variables, to gates with $2^{|R|}$-ary variables.

Examples of gates are the OR, AND, XOR (from $[{\cal I}(R)]^2$ to ${\cal I}(R)$), and the NOT 
(from ${\cal I}(R)$ to ${\cal I}(R)$):
\begin{eqnarray}
&&{\cal M}_{\rm OR}(S_1,S_2)=S_1+S_2+S_1\cdot S_2=S_1\vee S_2;\;\;\;\;S_1,S_2\in {\cal I}(R)\nonumber\\
&&{\cal M}_{\rm AND}(S_1,S_2)=S_1\cdot S_2=S_1\wedge S_2\nonumber\\
&&{\cal M}_{\rm XOR}(S_1,S_2)=S_1+S_2\nonumber\\
&&{\cal M}_{\rm NOT}(S_1)=R+S_1={\overline S_1}=R\setminus S_1.
\end{eqnarray}
The sets are labels for the `real\rq{} inputs and outputs, which for classical gates are electric currents with various values.
As a `non-binary example\rq{}, we consider the OR, AND, XOR gates with $R=\{A_1,A_2\}$ and for convenience use the notation:
\begin{eqnarray}\label{or}
\emptyset\;\;\rightarrow\;\;0;\;\;\;\;\{A_1\}\;\;\rightarrow\;\;1;\;\;\;\;\{A_2\}\;\;\rightarrow\;\;2
;\;\;\;\;\{A_1,A_2\}\;\;\rightarrow\;\;3
\end{eqnarray}
The outputs of these gates are shown in table \ref{t1}.

In a reversible classical gate, there is a bijective map between the set of all inputs and the set of all outputs.
In other words, to every output corresponds exactly one input.
The OR, AND, XOR,  are not reversible gates, but the NOT is a reversible gate.
We are interested in  reversible classical gates, because
unitary quantum transformations are reversible, and in this sense reversible classical gates are linked to quantum gates.
As an example, we study the classical CNOT  (controlled NOT) gate \cite{G1,G2,G3,G4}, with binary and more generally 
$2^{|R|}$-ary variables, using the language of Boolean rings, discussed earlier.

The CNOT gate is a bijective function from $[{\cal I}(R)]^2$ to itself:
\begin{eqnarray}\label{cnot}
{\cal M}(S_1,S_2)=(S_1,S_1+S_2)
\end{eqnarray}
$S_1$ is the `control input\rq{} and $S_2$ is the `target input\rq{}.
We note that:
\begin{itemize}
\item
If the control input is $S_1=\emptyset$, the target input remains unchanged:
\begin{eqnarray}
{\cal M}(\emptyset ,S_2)=(S_1,S_2)
\end{eqnarray}
\item
If the control input is $S_1=R$ (the identity in ${\cal I}(R)$), the target changes from 
$S_2$ to its complement $\overline {S_2}=R+S_2=R\setminus S_2$:
\begin{eqnarray}
&&{\cal M}(R,S_2)=(R, \overline {S_2}).
\end{eqnarray}
\item
If $S_1 \subset S_2$, the target changes from 
$S_2$ to its subset ${S_2}\setminus S_1$.
\item
If $S_2 \subset S_1$, the target changes from 
$S_2$ to ${S_1}\setminus S_2$.
\end{itemize}
For $R=\{A_1\}$ (i.e., $|R|=1$) this is the CNOT gate, with binary variables.
Using the notation
\begin{eqnarray}\label{oror}
\emptyset\;\;\rightarrow\;\;0;\;\;\;\;\{A_1\}\;\;\rightarrow\;\;1
\end{eqnarray}
we give the $4$ possible inputs, and the corresponding outputs in table \ref{t2}.
As a `non-binary example\rq{}, we consider the case  where $R=\{A_1, A_2\}$.
In this case we have $16$ possible inputs, and the corresponding outputs are shown in table \ref{t3}
(using the notation in Eq.(\ref{or})). 

The following proposition is based heavily on
the formalism of Boolean rings. In this sense it translates the structure of Boolean rings into the language of classical CNOT gates. 
It will also be expressed later, in the language of quantum CNOT gates.

\begin{proposition}\label{111}
For a fixed control input $S_1\in {\cal I}(R)$, the map from the target input to the target output 
\begin{eqnarray}\label{112}
{\cal M}_{jT}(S_2)=S_1+S_2;\;\;\;\;j=1,...,2^{|R|}
\end{eqnarray}
is a bijective map from ${\cal I}(R)$ to itself.
Also ${\cal M}_{jT}\circ {\cal M}_{jT}={\bf 1}$.
\end{proposition}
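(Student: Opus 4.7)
The plan is to exploit the Boolean ring structure of $\mathcal{I}(R)$ established in section \ref{boole}, and in particular the involutive nature of addition given by Eq.(\ref{A2}). The proof should be very short, so I would organize it as three small checks: well-definedness, involution, bijectivity.

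First I would verify that $\mathcal{M}_{jT}$ actually maps $\mathcal{I}(R)$ into $\mathcal{I}(R)$. Since $S_1, S_2 \subset R$, the symmetric difference $S_1 + S_2 = (S_1 \setminus S_2) \cup (S_2 \setminus S_1)$ is again a subset of $R$, hence an element of $\mathcal{I}(R)$. This uses only the definition of addition in the ring and the fact that $\mathcal{I}(R)$ is the powerset of $R$.

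Next, the involution property $\mathcal{M}_{jT} \circ \mathcal{M}_{jT} = \mathbf{1}$ follows from a direct computation using associativity of $+$ together with the two ring identities $S_1 + S_1 = \emptyset$ and $\emptyset + S_2 = S_2$ from Eq.(\ref{A2}):
\begin{eqnarray}
\mathcal{M}_{jT}(\mathcal{M}_{jT}(S_2)) = S_1 + (S_1 + S_2) = (S_1 + S_1) + S_2 = \emptyset + S_2 = S_2.
\end{eqnarray}

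Finally, the involution immediately gives bijectivity: a map equal to its own two-sided inverse is automatically injective (if $\mathcal{M}_{jT}(S_2) = \mathcal{M}_{jT}(S_2')$, apply $\mathcal{M}_{jT}$ again to both sides) and surjective (every $T \in \mathcal{I}(R)$ is the image of $\mathcal{M}_{jT}(T)$). I do not expect any real obstacle here; the whole content of the proposition is packaged inside the Boolean ring axioms already listed in Eqs.(\ref{A1})--(\ref{A3}), and the proof essentially just quotes $S_1 + S_1 = \emptyset$.
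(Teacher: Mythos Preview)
Your proposal is correct and follows essentially the same approach as the paper: both arguments rest on the Boolean ring structure of $\mathcal{I}(R)$, and in particular on the identity $S_1+S_1=\emptyset$ from Eq.(\ref{A2}). The only cosmetic difference is ordering: the paper first invokes the Abelian group structure of $(\mathcal{I}(R),+)$ to get bijectivity (translation in a group is a bijection) and then deduces the involution from $S_1+S_1=\emptyset$, whereas you prove the involution first and read off bijectivity as an immediate consequence---a slightly cleaner route, but not a genuinely different one.
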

\begin{proof}
${\cal I}(R)$ is a Boolean ring, and therefore it is an Abelian group with respect to addition. 
Using this we prove that ${\cal M}_{jT}$ is a bijective map from ${\cal I}(R)$ to itself.
For example if ${\cal M}_{jT}(S_1)={\cal M}_{jT}(S_1')$, the group properties prove that $S_1=S_1'$.

The property $S_1+S_1=0$, implies that ${\cal M}_{jT}\circ {\cal M}_{jT}={\bf 1}$.
\end{proof}
Tables \ref{t2}, \ref{t3} provide examples of this proposition.

\section{The Boolean ring of coherent spaces and quantum gates}\label{X1}
\subsection{The  distributive lattice ${\cal L} _{\rm coh}\simeq {\cal L}$ }\label{p1}

${\cal L}_{\rm coh}$ is defined to be the set of coherent subspaces $H(S)$, with $S$ a finite subset of ${\mathbb C}$.
The $H(\emptyset)={\cal O}$ is an element of ${\cal L}_{\rm coh}$.
Let $S_1,S_2$ be finite subsets of ${\mathbb C}$.
The set ${\cal L}_{\rm coh}$, with the disjunction $H(S_1)\vee H(S_2)$ (Eq.(\ref{1})), conjunction $H(S_1)\wedge H(S_2)$
(Eq.(\ref{2})), and subspace partial order
$H(S_1)\prec H(S_2)$, is a distributive lattice which we call `coherent lattice\rq{}.
Only a finite number of disjunctions and conjunctions are considered, so that ${\cal L}_{\rm coh}$ is closed under these operations.

For coherent subspaces, we can prove that 
\begin{eqnarray}\label{ty}
H(S_1)\vee H(S_2)=H(S_1 \cup S_2);\;\;\;\;\;H(S_1)\wedge H(S_2)=H(S_1\cap S_2).
\end{eqnarray} 
The proof that $H(S_1)\wedge H(S_2)=H(S_1)\cap H(S_2)$ is equal to $H(S_1\cap S_2)$
is based on the fact that a finite number of coherent states, is linearly independent.

The ${\cal O}$ is the zero in this lattice.
There is no $1$ in this lattice (the full Hilbert space $h$ does not belong to ${\cal L}_{\rm coh}$).
The negation operation is not defined in this lattice, and the orthocomplements of coherent projectors are not included in 
${\cal L} _{\rm coh}$.
Comparison of Eqs.(\ref{mmm}),(\ref{ty}) shows that the lattice ${\cal L}_{\rm coh}$ is isomorphic to the 
distributive lattice ${\cal L}$.

In analogy with  Eq.(\ref{aqq}), the principal ideal  of all coherent subspaces of the coherent space $H(R)$ is:
\begin{eqnarray}\label{wq}
{\cal I}_{\rm coh}(R)=\{H(S)\in{\cal L} _{\rm coh}\;|\;S\subset R\}.
\end{eqnarray}

\paragraph*{The quantum OR in terms of  measurements:} We perform a measurement with the projector $\Pi(S_i)$ ($i=1,2$)
 on a density matrix $\rho$. If the outcome is `yes\rq{}, then the system collapses in the state
 \begin{eqnarray}
\rho\rq{}=\frac{\Pi(S_i)\rho \Pi(S_i)}{{\rm Tr}[\rho \Pi(S_i)]}.
\end{eqnarray}
A subsequent  measurement on this with the projector $\Pi (S_1 \cup S_2)$ (which detects if the system is in $H(S_1)\vee H(S_2)$)
will give `yes\rq{} with probability $1$.

\paragraph*{The quantum AND in terms of  measurements:}
We perform a measurement with the projector $\Pi(S_1\cap S_2)$
 on a density matrix $\rho$. If the outcome is `yes\rq{}, then the system collapses in the state
 \begin{eqnarray}
\rho\rq{}=\frac{\Pi (S_1\cap S_2)\rho \Pi(S_1\cap S_2)}{{\rm Tr}[\rho \Pi(S_1\cap S_2)]}.
\end{eqnarray}
A subsequent  measurement on this with any of the projector $\Pi (S_i )$
will give `yes\rq{} with probability $1$.

\subsection{${\cal L} _{\rm coh}$  as a Boolean ring}\label{p2}

In ${\cal L} _{\rm coh}$ we define addition and multiplication as:
\begin{eqnarray}\label{rrr}
H(S_1)+H(S_2)=H(S_1+ S_2);\;\;\;\;\;H(S_1)\cdot H(S_2)=H(S_1\cdot S_2)=H(S_1)\wedge H(S_2).
\end{eqnarray}
The $H(S_1)+H(S_2)$ is the logical XOR operation.
Only finite sums and finite products, are considered.

It is easily seen that
\begin{eqnarray}
H(S_1+ S_2)=H(S_1\setminus S_2)\vee H(S_2\setminus S_1).
\end{eqnarray}
We note that 
\begin{eqnarray}
H(S_1\setminus S_2)\wedge H(S_2\setminus S_1)={\cal O},
\end{eqnarray}
and in this sense, a vector which belongs entirely in $H(S_1\setminus S_2)$ does not belong in $H(S_2\setminus S_1)$, and vice versa.
However, the $H(S_1+ S_2)$ contains the vectors in the spaces $H(S_1\setminus S_2)$, $H(S_2\setminus S_1)$, and their superpositions.
In this sense, quantum XOR is different from classical XOR.

${\cal L}_{\rm coh}$ with these operations
is a commutative ring (without identity) and with idempotent multiplication:
\begin{eqnarray}\label{B1}
H(S_1)\cdot H(S_1)=H(S_1).
\end{eqnarray}
Therefore it is a Boolean ring, isomorphic to ${\cal L}$.
Properties analogous to those in Eq.(\ref{73})-(\ref{A3}) hold here also, and for convenience we summarize them:
\begin{eqnarray}\label{B1}
&&H(S_1)\vee H(S_2)=H(S_1)+H(S_2)+[H(S_1)\cdot H(S_2)]\nonumber\\
&&H(S_1)\cdot[ H(S_2)+ H(S_3)]=[H(S_1)\cdot H(S_2)]+[H(S_1)\cdot H(S_3)]\nonumber\\
&&H(S_1)+{\cal O}=H(S_1)\nonumber\\
&&H(S_1)+H(S_1)={\cal O};\;\;\;\;\;H(S_1)=-H(S_1)\nonumber\\
&&H(S_1)\cdot H(S_2)\cdot H(S_1+S_2)={\cal O}.
\end{eqnarray}
The ideals given in Eq.(\ref{wq}) from a lattice theory point of view, are also ideals from a ring theory point of view.
In analogy to Eq.(\ref{ineq})
\begin{eqnarray}
H(S_1)\prec H(S_2)\;\rightarrow\;H(S_1)\cdot H(S_0)\prec H(S_2)\cdot H(S_0).
\end{eqnarray}
But $H(S_1)\prec H(S_2)$ does not imply $H(S_1)+ H(S_0)\prec H(S_2)+ H(S_0)$.

\paragraph*{The quantum XOR in terms of  measurements:} We perform a measurement with the projector 
$\Pi(S_1\setminus S_2)$
 on a density matrix $\rho$. If the outcome is `yes\rq{}, then the system collapses in the state
 \begin{eqnarray}
\rho\rq{}=\frac{\Pi(S_1\setminus S_2)\rho \Pi(S_1\setminus S_2)}{{\rm Tr}[\rho \Pi(S_1\setminus S_2)]}.
\end{eqnarray}
A subsequent  measurement on this with the projector $\Pi (S_1 + S_2)$ (which detects if the system is in $H(S_1\setminus S_2)$ or in $H(S_2\setminus S_1)$)
will give `yes\rq{} with probability $1$.

\subsection{Quantum CNOT gates with input in the coherent space $H_A(A_1,A_2)\otimes H_B(B_1,B_2)$}\label{GG1}

In a controlled quantum gate \cite{G1,G2,G3,G4,G5}, we have the unitary transformations:
 \begin{eqnarray}
\ket{e}\otimes \ket{t}\;\rightarrow \;\ket{e}\otimes ({\cal U}_T\ket{t})
;\;\;\;\;\ket{e}\in h_1;\;\;\;\;\;\ket{t}\in h_2.
\end{eqnarray}
Here $\ket{e}$ is the quantum state in the control input, and $\ket{t}$ is the quantum state in the target input.
The control output is the same quantum state as in the control input.
The gate performs the unitary transformation ${\cal U}_T$ (the index T indicates `target\rq{}) on the target output. Depending on the control input, the ${\cal U}_T$ is ${\bf1}$, or another unitary operator.

Here we consider the quantum version of the CNOT gate in table \ref{t2}.
There are two possible inputs in the control and in the target, and therefore the required coherent space is
$H_A(A_1,A_2)\otimes H_B(B_1,B_2)$.
An arbitrary vector in the non-orthogonal `coherent basis' in this space, is
\begin{eqnarray}
\begin{pmatrix}
\alpha _1\\
\alpha _2
\end{pmatrix}
\otimes
\begin{pmatrix}
\beta _1\\
\beta _2
\end{pmatrix}\;\rightarrow\;
[\alpha _1\ket{A_1}+\alpha _2\ket{A_2}]\otimes [\beta _1\ket{B_1}+\beta _2\ket{B_2}]
\end{eqnarray}
The scalar product of such vectors will involve the $g(A_1,A_2)\otimes g(B_1,B_2)$ as in Eq.(\ref{prod}), and it gives the expected results
for overlaps of coherent states.

The CNOT gate performs the following unitary transformation (it is a sum of tensor products of $2\times 2$ `control matrices' with $2\times 2$ `target matrices' )
\begin{eqnarray}\label{339}
&&U=\gamma _{1A}E_1(A_1,A_2)\otimes {\cal U}_{1T}+\gamma _{2A}E_2(A_1,A_2)\otimes {\cal U}_{2T}\nonumber\\
&&{\cal U}_{1T}=V(0);\;\;\;{\cal U}_{2T}=V(\pi);\;\;\;\;
V(\phi)=\gamma _{1B}E_1(B_1,B_2)+\exp(i\phi)\gamma _{2B}E_2(B_1,B_2)
\end{eqnarray}
where $\gamma _{jA}, E_j(A_1,A_2)$ and $\gamma _{jB}, E_j(B_1,B_2)$ are the eigenvalues and eigenprojectors of 
$g(A_1,A_2), g(B_1,B_2)$, correspondingly (see example \ref{ex1}).We note that
\begin{eqnarray}\label{c45}
{\cal U}_{1T}= g(B_1,B_2);\;\;\;{\cal U}_{2T}= g(B_1,B_2)-2\gamma _{2B}E_2(B_1,B_2);\;\;\;[{\cal U}_{1T},{\cal U}_{2T}]=0.
\end{eqnarray}

We first show that $U$ is a unitary operator, by proving that
\begin{eqnarray}\label{cc45}
U[G(A_1,A_2)\otimes G(B_1,B_2)]U^{\dagger}=g(A_1,A_2)\otimes g(B_1,B_2).
\end{eqnarray}
This is the unitarity relation in Eq.(\ref{uni}) in the non-orthogonal basis of coherent states, extended for a tensor product.
We first show that
\begin{eqnarray}
U[G(A_1,A_2)\otimes G(B_1,B_2)]=E_1(A_1,A_2)\otimes {\bf 1}+E_2(A_1,A_2)\otimes [E_1(B_1,B_2)-E_2(B_1,B_2)],
\end{eqnarray}
and then multiplication with $U^{\dagger}$ gives the result in Eq.(\ref{cc45}).

We also show that ${\cal U}_{jT}$ are unitary transformations, by showing that they satisfy
the unitarity relation  (Eq.(\ref{uni})):
\begin{eqnarray}\label{c46}
{\cal U}_{jT}G(B_1,B_2)({\cal U}_{jT})^{\dagger}= g(B_1,B_2).
\end{eqnarray}

We consider two cases of linearly independent control inputs:
\begin{itemize}
\item
If the control input is in the state ${e_1(A_1,A_2)}$ in Eq.(\ref{pd15}), then
\begin{eqnarray}\label{AA}
U[{e_1(A_1,A_2)}\otimes {\cal T}]={e_1(A_1,A_2)}\otimes {\cal T};\;\;\;{\cal T}=
\begin{pmatrix}
t _1\\
t _2
\end{pmatrix}\;\rightarrow t_1\ket{B_1}+t_2\ket{B_2}.
\end{eqnarray} 
We use here Eq.(\ref{bbb}) (which involves 
the $G(A_1,A_2)\otimes G(B_1,B_2)$).
We also use the fact that ${\cal U}_{1T}=g(B_1,B_2)$, to show that ${\cal U}_{1T}{\cal T}={\cal T}$.
In this case the target output is the same as the target input.
This is the analogue of $(0,0)\rightarrow (0,0)$ and $(0,1)\rightarrow (0,1)$ in the classical CNOT gates (table \ref{t2}).
But in addition to that, here we have superpositions, i.e., the ${\cal T}$ is any vector in the two-dimensional space $H_B(B_1,B_2)$.
\item
If the control input is in the state ${e_2(A_1,A_2)}$ (Eq.(\ref{pd15})), 
then
\begin{eqnarray}
&&U[{e_2(A_1,A_2)}\otimes {\cal T}]={e_2(A_1,A_2)}\otimes ({\cal U}_{2T}{\cal T})
\end{eqnarray} 
In this case the target input ${\cal T}$ is transformed to ${\cal U}_{2T}{\cal T}$ at the output.
\end{itemize}
We next consider the states
\begin{eqnarray}
&&{\cal T}_p=\frac{1}{\sqrt 2}[{e_1(B_1,B_2)}+{e_2(B_1,B_2)}];\;\;\;\;{\cal T}_m=\frac{1}{\sqrt 2}[{e_1(B_1,B_2)}-{e_2(B_1,B_2)}].
\end{eqnarray} 
The quantum CNOT gate gives
\begin{eqnarray}
&&[e_1(A_1,A_2), {\cal T}_p]\;\;\rightarrow\;\;[e_1(A_1,A_2), {\cal T}_p];\;\;\;\;\;[e_1(A_1,A_2), {\cal T}_m]\;\;\rightarrow\;\;[e_1(A_1,A_2), {\cal T}_m]
\nonumber\\
&&[e_2(A_1,A_2), {\cal T}_p]\;\;\rightarrow\;\;[e_2(A_1,A_2), {\cal T}_m];\;\;\;\;\;[e_2(A_1,A_2), {\cal T}_m]\;\;\rightarrow\;\;[e_2(A_1,A_2), {\cal T}_p]
\end{eqnarray} 
This is the quantum analogue of table \ref{t2} in the classical CNOT gates.
Therefore we have here a quantum analogue of the map ${\cal M}_{jT}$ in proposition \ref{111} for the classical case, which satisfies the condition
${\cal M}_{jT}\circ {\cal M}_{jT}={\bf 1}$.
But in the quantum case we can also have superpositions (both in the target input and  in the control input), and for this reason we present 
the following more general statement, as the quantum analogue of proposition \ref{111} .
\begin{proposition}\label{111q}
For a fixed control input $e_j(A_1,A_2)$, the map from the target input to the target output 
\begin{eqnarray}\label{112}
{\cal M}_{jT}:\;\;{\cal T}\;\;\rightarrow\;\;{\cal U}_{jT}{\cal T}
\end{eqnarray}
is a bijective map  from $H_B(B_1,B_2)$ to itself.
Also ${\cal M}_{jT}\circ {\cal M}_{jT}={\bf 1}$.
\end{proposition}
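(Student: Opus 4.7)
The plan is to reduce both claims to a direct spectral computation in the two-dimensional coherent space $H_B(B_1,B_2)$, using the orthogonal eigenprojectors $E_j(B_1,B_2)$ of $g(B_1,B_2)$ introduced in example \ref{ex1}. The first observation is that bijectivity follows automatically from the involution ${\cal M}_{jT}\circ{\cal M}_{jT}=\mathbf{1}$, since any self-inverse map on a set is bijective. So the real content is the involution property.

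To prove the involution, I would iterate the action of ${\cal U}_{jT}$ on a vector ${\cal T}\in H_B(B_1,B_2)$ using the non-orthogonal basis formalism. By Eq.~(\ref{bbb}) and Eq.~(\ref{150}), the iterated action ${\cal M}_{jT}\circ{\cal M}_{jT}$ on coherent-basis components reduces to the $2{\times}2$ matrix product $G(B_1,B_2){\cal U}_{jT}G(B_1,B_2){\cal U}_{jT}$. Substituting the spectral decomposition
\begin{eqnarray}
G(B_1,B_2)=\frac{1}{\gamma_{1B}}E_1(B_1,B_2)+\frac{1}{\gamma_{2B}}E_2(B_1,B_2),
\end{eqnarray}
together with ${\cal U}_{1T}=\gamma_{1B}E_1(B_1,B_2)+\gamma_{2B}E_2(B_1,B_2)$ and ${\cal U}_{2T}=\gamma_{1B}E_1(B_1,B_2)-\gamma_{2B}E_2(B_1,B_2)$ from Eq.~(\ref{339}), and using $E_jE_k=\delta_{jk}E_j$, the matrix products collapse to $G{\cal U}_{1T}=E_1+E_2$ and $G{\cal U}_{2T}=E_1-E_2$. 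Squaring gives $E_1+E_2$ in both cases.

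The final step is to identify $E_1(B_1,B_2)+E_2(B_1,B_2)=\Pi(B_1,B_2)$, the projector onto $H_B(B_1,B_2)$, as the identity on that subspace. Its matrix elements in the coherent basis are precisely $g_{k\ell}(B_1,B_2)$ by Eq.~(\ref{511}), and combined with Eq.~(\ref{bbb}) this yields $t'_j=\sum_k g_{jk}\,t_k/\!\sim\! t_j$ after the compensating $G$-factor in the action formula; equivalently, the coherent-basis components of the iterated vector equal those of ${\cal T}$. Hence ${\cal M}_{jT}\circ{\cal M}_{jT}=\mathbf{1}$, and bijectivity follows.

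The main obstacle I expect is purely bookkeeping: distinguishing the operator ${\cal U}_{jT}$ from its matrix in the non-orthogonal coherent basis, and correctly inserting the $G(B_1,B_2)$ factors when composing two actions via Eq.~(\ref{150}). The unitarity relation (\ref{c46}) provides an independent consistency check and also gives an alternative bijectivity proof (a unitary operator on a finite-dimensional space is invertible), but once the eigenprojector orthogonality is exploited the spectral computation is immediate and no further technical input is required.
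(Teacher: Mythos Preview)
Your approach is correct and essentially the same as the paper's: both compute the square of ${\cal U}_{jT}$ in the non-orthogonal coherent basis via the spectral decomposition of $g(B_1,B_2)$, using Eq.~(\ref{150}). The paper orders the logic slightly differently---it invokes the unitarity relation Eq.~(\ref{c46}) to get bijectivity first and then checks $({\cal U}_{jT})^2=\mathbf{1}$ separately---whereas you more economically derive bijectivity as a corollary of the involution; both routes are valid and you already note the unitarity argument as an alternative.

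One small bookkeeping slip to clean up (you anticipated this): in your final paragraph, $E_1(B_1,B_2)+E_2(B_1,B_2)$ is the ordinary $2{\times}2$ identity matrix $I_2$, not the operator $\Pi(B_1,B_2)$ whose coherent-basis matrix elements are $g_{k\ell}$. Since you have already shown $(G{\cal U}_{jT})^2=I_2$ acting on component vectors, the iterated map preserves components directly and no further appeal to Eq.~(\ref{511}) or a ``compensating $G$-factor'' is needed.
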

\begin{proof}
We have shown in Eq.(\ref{c46}) that the  transformations ${\cal U}_{jT}$ are unitary, and therefore ${\cal M}_{jT}$ is a 
bijective map  from $H_B(B_1,B_2)$ to itself. We can also show that $({\cal U}_{jT})^2={\bf 1}$
(the square is calculated  using Eq.(\ref{150})).
Therefore ${\cal M}_{jT}\circ {\cal M}_{jT}={\bf 1}$.
\end{proof}

In the case $A_1=-A_2\rightarrow \infty$ and $B_1=-B_2\rightarrow \infty$ the coherent states are almost orthogonal and
\begin{eqnarray}
g(A_1,A_2)\approx g(B_1,B_2)\approx{\bf 1};\;\;\;\gamma _{1A}\approx \gamma _{2A}\approx \gamma _{1B}\approx \gamma _{2B}\approx 1
\end{eqnarray} 
and Eq.(\ref{333}) reduces to
\begin{eqnarray}
U\approx E_1(A_1,A_2)\otimes {\bf 1}+E_2(A_1,A_2)\otimes {\cal U}_{2T}
\end{eqnarray}
This is the form found in the literature (usually in another basis).

\subsection{Quantum CNOT gates with input in the coherent space $H_A(S_A)\otimes H_B(S_B)$}\label{GG2}

Here we consider the quantum version of the CNOT gate in table \ref{t3}.
There are four possible inputs in the control and in the target, and therefore the required coherent space is
$H_A(S_A)\otimes H_B(S_B)$, where $S_A=\{A_1,A_2,A_3,A_4\}$ and 
$S_B=\{B_1,B_2,B_3,B_4\}$.
The CNOT gate performs the following transformation on input
states (it is a sum of tensor products of $4\times 4$ `control matrices' with  $4\times 4$ `target matrices'):
\begin{eqnarray}\label{333}
U&=&\gamma _{1A}E_1(S_A)\otimes {\cal U}_{1T}+\gamma _{2A}E_2(S_A)\otimes {\cal U}_{2T}\nonumber\\
&+&\gamma _{3A}E_3(S_A)\otimes {\cal U}_{3T}+\gamma _{4A}E_4(S_A)\otimes {\cal U}_{4T}
\end{eqnarray}
where
\begin{eqnarray}
&&{\cal U}_{1T}=V(0,0,0);\;\;\;{\cal U}_{2T}=V(\pi,0,0);\;\;\;{\cal U}_{3T}=V(0,\pi,0);\;\;\;{\cal U}_{4T}=V(0,0,\pi)\nonumber\\
&&V(\phi _2, \phi _3,\phi_4)=\gamma _{1B}E_1(S_B)+\exp(i\phi _2)\gamma _{2B}E_2(S_B)\nonumber\\&&+
\exp(i\phi _3)\gamma _{3B}E_3(S_B)+\exp(i\phi _4)\gamma _{4B}E_4(S_B),
\end{eqnarray}
$U$ is a unitary operator. It obeys the unitarity relation (Eq.(\ref{uni})):
\begin{eqnarray}
U[G(S_A)\otimes G(S_B)]U^{\dagger}=g(S_A)\otimes g(S_B).
\end{eqnarray}
${\cal U}_{jT}$ is also a unitary operator:
\begin{eqnarray}
{\cal U}_{jT} G(S_B){\cal U}_{jT}^{\dagger}= g(S_B).
\end{eqnarray}
We note that
\begin{eqnarray}\label{c45}
&&{\cal U}_{1T}= g(S_B);\;\;\;\;{\cal U}_{2T}= g(S_B)-2\gamma _{2B}E_2(S_B)\nonumber\\
&&{\cal U}_{3T}= g(B_1,B_2)-2\gamma _{3B}E_3(S_B);\;\;\;\;{\cal U}_{4T}= g(B_1,B_2)-2\gamma _{4B}E_4(S_B)
\end{eqnarray}
If the control input is in the eigenstate ${e_j(S_A)}$, given in Eq.(\ref{eigv}), then
\begin{eqnarray}
U[{e_j(S_A)}\otimes {\cal T}]={e_j(S_A)}\otimes [{\cal U}_{jT}{\cal T}]
\end{eqnarray} 
In this case the target input $\cal T$ is transformed to ${\cal U}_{jT}\cal T$ at the output.
It is seen that for any of the control inputs $e_j(S_A)$, 
the target input ${\cal T}$ is transformed with the unitary operator ${\cal U}_{jT}$, into ${\cal U}_{jT}{\cal T}$.
This is a bijective map from $H(S_B)$ to itself, and a proposition analogous to \ref{111q}, holds here also.

\section{Discussion}
We have discussed the following three interrelated topics:
\begin{itemize}
\item
{\bf Coherent spaces:} They are subspaces of the Hilbert space, spanned by a finite number of coherent states. 
Each coherent space is described uniquely by a finite set of complex numbers.
Using the language of the Dirac contour representation, we have shown that
the corresponding projectors, have the following properties:
\begin{itemize}
\item
There is a resolution of the identity in terms of all them (Eq.(\ref{985A})). 
Also some smaller sets of coherent spaces are total sets, as discussed in section \ref{totalset}.
 \item
Under both displacement transformations and time evolution, they are transformed into other projectors of the same type (proposition \ref{propo1}).
\item
They obey the relations in proposition \ref{bnm}, which are extensions of the fact that coherent states are eigenstates of the annihilation operator. 
\end{itemize}

\item
{\bf The Boolean ring   of finite sets of complex numbers:}
The set ${\cal L}$ of all finite sets of complex numbers with the logical OR and AND operations, is a distributive lattice. It is also a Boolean ring, with the operations XOR, AND, and it has the properties discussed in section \ref{boole}.
The Boolean ring provides the theoretical foundation for a study of classical and quantum gates, 
with binary and more generally $2^n$-ary inputs and outputs.
Applications to CNOT gates, have been discussed in section \ref{VV}.
The general Boolean ring formalism, is translated into the language of classical CNOT gates, through proposition \ref{111}.
\item 
{\bf The Boolean ring   of coherent spaces:}
The set ${\cal L} _{\rm coh}$ of all coherent spaces with the logical OR and AND operations, is a distributive lattice isomorphic to  ${\cal L}$. It is also a Boolean ring, with the operations XOR and AND.
Application to the quantum CNOT gate with coherent states, has been discussed in sections \ref{GG1}, \ref{GG2}.
The non-orthogonal nature of the coherent states, is taken into account with the matrices $g,G$.
The general Boolean ring formalism, is translated into the language of quantum CNOT gates, through proposition \ref{111q}.
\end{itemize}
The work generalizes the concept of coherence, and provides the theoretical foundation for quantum gates with coherent states.

\newpage
\begin{table}
\caption{Outputs of the classical OR, AND, XOR gates with $R=\{A_1, A_2\}$ and the notation in Eq.(\ref{or}).}
\def\arraystretch{2}
\begin{tabular}{|c|c|c|c|c|c|c|c|c|c|c|c|c|c|c|c|c|}\hline
{\rm in} &$(0,0)$&$(1,0)$&$(2,0)$&$(3,0)$&$(0,1)$&$(1,1)$&$(2,1)$&$(3,1)$&$(0,2)$&$(1,2)$&$(2,2)$&$(3,2)$&$(0,3)$&$(1,3)$&$(2,3)$&$(3,3)$\\\hline
{\rm OR}&$0$&$1$&$2$&$3$&$1$&$1$&$3$&$3$&$2$&$3$&$2$&$3$&$3$&$3$&$3$&$3$\\\hline
{\rm AND}&$0$&$0$&$0$&$0$&$0$&$1$&$0$&$1$&$0$&$0$&$2$&$2$&$0$&$1$&$2$&$3$\\\hline
{\rm XOR}&$0$&$1$&$2$&$3$&$1$&$0$&$3$&$2$&$2$&$3$&$0$&$1$&$3$&$2$&$1$&$0$\\\hline
\end{tabular} \label{t1}
\end{table}
\begin{table}
\caption{The (control, target) at the input and output of the classical CNOT gate with $R=\{A_1\}$ and the notation in Eq.(\ref{oror}).}
\def\arraystretch{2}
\begin{tabular}{|c|c|c|c|c|}\hline
{\rm in} &$(0,0)$&$(0,1)$&$(1,0)$&$(1,1)$\\\hline
{\rm out}&$(0,0)$&$(0,1)$&$(1,1)$&$(1,0)$\\\hline
\end{tabular} \label{t2}
\end{table}
\begin{table}
\caption{The (control, target) at the input and output of the classical CNOT gate with $R=\{A_1, A_2\}$ and the notation in Eq.(\ref{or}).}
\def\arraystretch{2}
\begin{tabular}{|c|c|c|c|c|c|c|c|c|c|c|c|c|c|c|c|c|}\hline
{\rm in} &$(0,0)$&$(0,1)$&$(0,2)$&$(0,3)$&$(1,0)$&$(1,1)$&$(1,2)$&$(1,3)$&$(2,0)$&$(2,1)$&$(2,2)$&$(2,3)$&$(3,0)$&$(3,1)$&$(3,2)$&$(3,3)$\\\hline
{\rm out}&$(0,0)$&$(0,1)$&$(0,2)$&$(0,3)$&$(1,1)$&$(1,0)$&$(1,3)$&$(1,2)$&$(2,2)$&$(2,3)$&$(2,0)$&$(2,1)$&$(3,3)$&$(3,2)$&$(3,1)$&$(3,0)$\\\hline
\end{tabular} \label{t3}
\end{table}

\end{document}